\def\a{\alpha}
\def\b{\beta}
\def\d{\delta}
\def\g{\gamma}
\def\l{\lambda}
\def\e{\epsilon}
\numberwithin{equation}{section}
\newtheorem{Theorem}{Theorem}[section]
\newtheorem*{Theorem*}{Theorem}
\newtheorem{Proposition}[Theorem]{Proposition}
\newtheorem{Conjecture}[Theorem]{Conjecture}
 { \theoremstyle{definition}

\newtheorem{Remark}[Theorem]{Remark} }
\begin{document}
\allowdisplaybreaks

\newcommand{\arXivNumber}{2508.?????}

\renewcommand{\PaperNumber}{066}

\FirstPageHeading

\ShortArticleName{Symmetric Separation of Variables for the Extended Clebsch and Manakov Models}

\ArticleName{Symmetric Separation of Variables\\ for the Extended Clebsch and Manakov Models}

\Author{Taras SKRYPNYK}

\AuthorNameForHeading{T.~Skrypnyk}

\Address{Bogolyubov Institute for Theoretical Physics, 14-b Metrolohichna Str., Kyiv, 03680, Ukraine}
\Email{\href{mailto:tskrypnyk@bitp.kiev.ua}{tskrypnyk@bitp.kiev.ua}}

\ArticleDates{Received September 09, 2024, in final form July 30, 2025; Published online August 05, 2025}

\Abstract{In the present paper, using a modification of the method of vector fields $Z_i$ of the bi-Hamiltonian theory of separation of variables (SoV), we construct {\it symmetric non-St\"ackel} variable separation for three-dimensional extension of the Clebsch model, which is equivalent (in the bi-Hamiltonian sense) to the system of interacting Manakov (Schottky--Frahm) and Euler tops. For the obtained symmetric SoV (contrary to the previously constructed asymmetric one), all curves of separation are the same and have genus five. It occurred that the difference between the symmetric and asymmetric cases is encoded in the different form of the vector fields~$Z$ used to construct separating polynomial. We explicitly construct coordinates and momenta of separation and Abel-type equations in the considered examples of symmetric SoV for the extended Clebsch and Manakov models.}

\Keywords{integrable system; separation of variables; anisotropic top}

\Classification{37J35; 37J37; 37J39}

\section{Introduction}
\subsection{Generalities}
Completely integrable Hamiltonian systems have been the objects of constant interest in theoretical and mathematical physics for more than one hundred years. Nevertheless, many problems in this theory still remain unsolved. One of the most important problems to be solved in general is the problem of variable separation. The separated variables $q_i$, $p_j$ are a set of (quasi)canonical coordinates such that the following system of equations (equations of separation) is satisfied~\cite{SklSep}:\looseness=1
\[
 \Phi_i(q_i, p_i,I_1, \dots ,I_n, C_1, \dots ,C_m) = 0, \qquad i\in \{1,\dots, n\}.
 \]
Here $\Phi_i$ are certain functions, $I_k$ are Poisson-commuting integrals of motion, $C_i$ are Casimir functions
and $n$ is half of the
dimension of the phase space.
The separated coordinates provide a~possibility to solve explicitly the Hamilton equations of motion upon resolving the Abel--Jacobi inversion problem. Separated variables are also important when solving quantum
integrable models \cite{SklSep}. That is why separation of variables (SoV) is a central issue in the theory of classical and quantum integrable systems.

 The most simple and investigated is the so-called St\"ackel-type SoV, when all the equations of separation are linear in the integrals and Casimir functions. In the present paper, we are interested in {\it non-St\"ackel} SoV that naturally arises for the integrable systems associated with the Lie algebras $\mathfrak{so}(2n)$.

There exist three main approaches to the variable separation: a classical one going back to the papers of St\"ackel \cite{Sta1,Sta2}, Levi-Civita \cite{Levi} and Agostinelli \cite{Agost} and developed later in the papers of Benenti and his school \cite{Benenti, Chanu1,Chanu2} and two modern ones. They are: the
 ``magic recipe'' of Sklyanin \cite{SklSep} and the bi-Hamiltonian approach of Magri, Falqui and Pedroni \cite{MFP,FP,FP2}.
 In the present paper, we develop the bi-Hamiltonian approach to SoV theory in its formulation based on the theory of the vector fields $Z_i$ \cite{MFP,FP,FP2}.
The theory of the vector fields $Z_i$ permits to encode the information on separated coordinates $q_i$ for the integrable bi-Hamiltonian systems into a set of differential conditions on the vector fields $Z_i$ that are satisfied by them with respect to two compatible Poisson brackets $\{\ ,\ \}_1$ and $\{\ ,\ \}_2$ \cite{MFP,FP,FP2}.

The bi-Hamiltonian approach to SoV (despite the mathematical beauty and good generality) has a weak point: in order to find the vector fields $Z_i$, one should resolve a complicated system of nonlinear PDEs. In some cases, this difficulty can be overcome. Indeed, in a series of our papers \cite{MagSkr, SkrJGP2019, SkrSto, SkrBD} we have proposed to impose a certain simplifying condition onto one of the vector fields $Z_i$, which transforms part of the PDEs from \cite{MFP,FP,FP2} into algebraic equations.\looseness=1

Another problem is in the fact that
the PDEs from \cite{MFP,FP,FP2} ``algebrized'' in \cite{MagSkr, SkrJGP2019, SkrSto, SkrBD} imply that the variable separation are of {\it St\"ackel type}, i.e., that all equations of separation are {\it linear} in the integrals of motion and Casimir functions. For {\it non-St\"ackel SoV} (when some of the equations of separation are {\it nonlinear} in some of the integrals), the corresponding equalities from \cite{MFP,FP,FP2} do not hold true.
Here the natural question arises: how to proceed with the method of the vector fields $Z_i$ in this case? What differential equation to ``algebrize''?\looseness=1

In our previous paper \cite{SkrAsSoVExCle}, we have answered the above question for a special subcase of general non-St\"ackel SoV.
For this purpose we have assumed certain special form of the equations of separation
and shown
that under this condition there exists an invariance vector field~$Z$ (combination of the vector fields $Z_i$, $i\in \{1,\dots, m\}$)
 that satisfies a differential condition replacing the equations on $Z_i$ obtained in \cite{MFP,FP,FP2} from the St\"ackel-type assumption on the equations of separation.
In order to be able to solve this condition we assume (similar to what was done in the St\"ackel case \cite{MagSkr, SkrJGP2019, SkrSto, SkrBD}) that the vector field $Z$ is a special one: it annihilates all its components in the initial system of coordinates. We call such vector fields ``algebraic''. As a result, we obtain a system of {\it quadratic algebraic equations} which is possible to solve explicitly. Moreover, it occurred that the obtained system of quadratic algebraic equations may have {\it different solutions} leading to {\it non-equivalent} separations of variables for the same integrable bi-Hamiltonian system. In the present paper, we illustrate this interesting phenomenon by the example of the extended Clebsch and Manakov models.\looseness=1

The extended Clebsch and Manakov model is an integrable
 bi-Hamiltonian system on a~nine-dimensional space coinciding with a~three-dimensional extension of $\mathfrak{e}^*(3)$ (with respect to the first Poisson brackets $\{\ ,\ \}_1$) or on a three-dimensional extension of $\mathfrak{so}^*(4)$ (with respect to the Poisson brackets $\{\ ,\ \}_2$). It possesses six Poisson-commuting integrals of motion $I_1=H$, $I_2=K$, $I_3=L$, $C_1$, $C_2$, $C_3$, where $C_1$, $C_2$, $C_3$ are Casimir functions of the bracket~$\{\ ,\ \}_1$.\looseness=1

 The described above modification of the method of vector fields $Z_i$ leads to a system of six quadratic equations on nine components of the vector field $Z$. In the paper \cite{SkrAsSoVExCle}, we have found one (highly non-trivial) solution of these quadratic equations. We have shown that it leads to ``asymmetric'' non-St\"ackel SoV characterised by two different separation curves $\mathcal{C}$ and $\mathcal{K}$ of different genus. The last fact makes the solution of the Abel--Jacobi inversion problem to be very complicated \cite{FedMagSkr2}.

 In this paper, we find another (also highly nontrivial) type of solution of the quadratic equations for the components of the vector field $Z$ and show that it leads to three pairs of (quasi)canonical separated variables $p_i$, $q_i$, $i\in \{1,2,3\}$, that
satisfy {\it the same curve of separation}~$\mathcal{K}$ of genus~five
\begin{gather}
(q_i+j_1)(q_i+j_2)(q_i+j_3) p_i^4+\big(q_i^3 C_3+q_i^2 C_2+ q_i H+K \big) p_i^2+ \frac{1}{4}(q_i C_1+L)^2=0,
\label{eqsepi0}
\end{gather}
It leads, in turn, to the Abel-type equations (quadratures) written in the differential form as follows:
\begin{subequations}\label{AbelCleExt0}
\begin{gather}
\sum\limits_{i=1}^3 \frac{2 q_i p_i^3 \mathrm{d}q_i}{ 4 (q_i+j_1) (q_i+j_2)(q_i+j_3) p^4_i-(q_i C_1+L)^2} = \mathrm{d} t_1,
\\
\sum\limits_{i=1}^3 \frac{2 p_i^3 \mathrm{d} q_i}{ 4 (q_i+j_1) (q_i+j_2)(q_i+j_3) p^4_i-(q_i C_1+L)^2} = \mathrm{d} t_2,
\\
\sum\limits_{i=1}^3 \frac{ (q_i C_1+L) p_i \mathrm{d} q_i}{ 4 (q_i+j_1) (q_i+j_2)(q_i+j_3) p^4_i-(q_i C_1+L)^2} = \mathrm{d} t_3,
\end{gather}
\end{subequations}
where $t_1$, $t_2$, $t_3$ are the parameters along the time flows of the integrals $H$, $K$ and $L$ correspondingly.
The fact that the differentials in \eqref{AbelCleExt0} are defined on the same curve, i.e., that the obtained SoV is ``symmetric'', makes the task of solution of the corresponding Abel--Jacobi inversion problem more plausible.

 The new ``algebraic'' vector field $Z$ written in terms of the initial dynamical variables, generated by it separating polynomial, the explicit formula for the momenta of separation, non-St\"ackel equations of separation \eqref{eqsepi0}, as well as symmetric Abel-type equations \eqref{AbelCleExt0} are the main results of the present article.

 The structure of the present paper is the following. In Section~\ref{sec2}, we present the general theory of SoV and the method of the vector fields $Z_i$ in St\"ackel and non-St\"ackel cases. In Section~\ref{sec3}, we~describe three-dimensional extension of the Clebsch and Manakov models, in Section~\ref{sec4}, we~find the corresponding algebraic vector field $Z$. In Section~\ref{sec5}, we construct symmetric SoV for the considered extensions of the Clebsch and Manakov models. In Section~\ref{sec6}, we conclude and describe open problems.

\section{Separation of variables}\label{sec2}

\subsection{Generalities}\label{general}
Let us recall the definitions of Liouville integrability and
separation of variables in the general theory of Hamiltonian
systems. An integrable Hamiltonian system with $n$ degrees of
freedom is determined on a $2n$-dimensional symplectic manifold
$\mathcal{M}$, embedded as a symplectic leaf in a~Poisson
manifold $(\mathcal{P},\{\ ,\ \}_1)$ as a level surface of $m$
Casimir functions $C_i$, by $n$ independent nontrivial integrals
$I_j$ commuting with respect to the Poisson bracket
$
\{I_i ,I_j\}_1 = 0$, $ i,j \in \{1, \dots, n\}$.
 To find separated variables means to find (at least locally) a set of coordinates~$q_i$,~$p_j$, $i,j \in
\{1, \dots, n\}$ such that there exist $n$ relations
\begin{equation}\label{sepeq0}
 \Phi_i(q_i, p_i,I_1, \dots ,I_n, C_1, \dots ,C_m) = 0, \qquad i\in \{1, \dots, n\},
\end{equation}
and the coordinates $q_i$, $p_j$, $i,j \in \{1, \dots, n\}$ are (quasi)canonical
\[
\{ p_i, q_j \}_1= f_i(q_i,p_i) \d_{ij}, \qquad \{q_i,q_j\}_1=0,\qquad
\{p_i,p_j\}_1=0, \qquad \forall i,j \in \{1, \dots, n\}
\]
for some functions $f_i$, $i\in \{1, \dots, n\}$, on $\mathbb{C}^2$.

It is possible to show \cite{SkrJMP2021}
 that the coordinates of separation $q_i$
 satisfy the following equations:
\begin{gather}\label{abel0}
\sum\limits_{i=1}^n \frac{ {\partial_{I_k} \Phi_i(q_i, p_i, I_1,\dots ,I_n, C_1, \dots ,C_m)} }{ {\partial_{p_i} \Phi_i(q_i, p_i ,I_1, \dots ,I_n, C_1, \dots ,C_m)}}\frac{1}{ f_i(q_i,p_i)} \frac{\partial q_i}{\partial t_j} = \d_{kj}, \qquad \forall k,j\in \{1, \dots, n\},
\end{gather}
where $t_j$ is the ``time'' corresponding to the integral $I_j$, i.e., a parameter along its Hamiltonian flow.

From the equations \eqref{abel0}, one easily deduces the Abel-type equations written in the differential form
\begin{equation}\label{abel0'}
\sum\limits_{i=1}^n \frac{ {\partial_{I_k} \Phi_i(q_i, p_i, I_1, \dots ,I_n, C_1, \dots ,C_m)} }{ {\partial_{p_i} \Phi_i(q_i, p_i ,I_1, \dots ,I_n, C_1, \dots ,C_m)}} \frac{\mathrm{d} q_i}{ f_i(q_i,p_i)} = \mathrm{d} t_k, \qquad k\in \{1, \dots, n\},
\end{equation}
where the momenta $p_i$ satisfy the equations of separation \eqref{sepeq0}.

The equations \eqref{abel0'} for the separated coordinates $q_1, \dots , q_n$ provide a
way to the explicit integration of the equations of motion. So the key problem in the integration process is construction of the coordinates $q_1, \dots, q_n$.
 One of possible methods to do this is the
 method of the vector fields $Z_i$.

\subsection[The method of the vector fields Z\_i]{The method of the vector fields $\boldsymbol{Z_i}$}

The method of the vector field $Z_i$ in the theory of separation
of variables was proposed in the paper
 \cite{MFP} and developed in the papers \cite{FP, FP2}. We will expose the
method in the version convenient for us.

\subsubsection[The vector fields Z\_i]{The vector fields $\boldsymbol{Z_i}$}
Let us assume that we have found a set of separated coordinates
$\{q_1, \dots , q_n, p_1, \dots , p_n\}$ on the generic symplectic leaf
in the Poisson manifold $\mathcal{P}$. The (local) coordinates $q_i$,
$p_j$ are canonical with respect to the brackets $\{ \ , \ \}_1$
 and satisfy some equations of separation \eqref{sepeq0}.

 We consider vector fields $Z_k$ defined in the set of coordinates $\{q_1, \dots , q_n,
p_1, \dots , p_n, C_1, \dots ,\allowbreak C_m\}$ on the Poisson manifold
$\mathcal{P}$ as follows:
\begin{subequations}\label{defZ}
\begin{gather}
Z_k(q_i)=0, \qquad Z_k(p_j)=0, \qquad i,j\in \{1, \dots, n\}, \quad k\in
\{1, \dots, m\},\label{defZ1}
\\
 Z_k(C_l)=\d_{kl}, \qquad k,l\in \{1, \dots ,m \}.\label{defZ2}
\end{gather}
\end{subequations}

 Observe that from the definition of the vector fields
$Z_k$ it immediately follows that
$
 Z_k (Z_l(C_l))\allowbreak=0$, $ i,k,l\in \{1, \dots, m\}$.

In order to proceed with the theory of the vector fields $Z_i$, we
will
 assume that the separated variables $q_i$, $p_j$, $i,j \in
\{1, \dots, n\}$ are the bi-Hamiltonian ones, i.e., on $\mathcal{P}$ there
exists another Poisson structure $\{\ ,\ \}_2$ such that
\begin{equation*}
\{ p_i, q_j\}_2=-q_i\d_{ij}, \qquad \{q_i,q_j\}_2=0,\qquad
\{p_i,p_j\}_2=0, \qquad \forall i,j \in \{1, \dots, n\}.
\end{equation*}

Now we can formulate the following theorem \cite{SkrAsSoVExCle}.
\begin{Theorem}\quad
\begin{itemize}
\item[$(i)$] The vector fields $Z_i$ are symmetries of the Poisson structure
$\{\ , \}_1$, i.e.,
\begin{equation}\label{symZP}
\mathrm{Lie}_{Z_i} \{\ ,\ \}_1=0, \qquad i\in \{1, \dots, m\},
\end{equation}

\item[$(ii)$] The vector fields $Z_i$ satisfy the following conditions with
respect to the Poisson structure~$\{\ , \}_2$:
\begin{equation}\label{symZQ}
\mathrm{Lie}_{Z_i} \{\ ,\ \}_2= \sum\limits_{j=1}^m Z_j \wedge [X_j, Z_i],
\qquad i\in \{1, \dots, m\},
\end{equation}
where $X_j$ is Hamiltonian vector field of $C_j$ with respect
to the second Poisson structure $\{\ ,\ \}_2$.

\item[$(iii)$] If the functions $\Phi_i(q_i,
p_i, I_1, \dots , I_n, C_1, \dots , C_m)$ are linear in $I_j$, $C_r$, $j\in
\{1, \dots, n\}$, $r\in \{1, \dots, m\}$, then
\begin{equation}\label{anHam}
Z_k (Z_l(I_i))=0, \qquad i\in \{1, \dots, n\}, \quad k,l \in \{1, \dots, m\}.
\end{equation}
\end{itemize}
\end{Theorem}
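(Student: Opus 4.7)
The plan is to work throughout in the adapted coordinate system $(q_1,\dots,q_n,p_1,\dots,p_n,C_1,\dots,C_m)$ on $\mathcal{P}$. In these coordinates the defining relations \eqref{defZ1}, \eqref{defZ2} collapse to $Z_k=\partial/\partial C_k$, so every Lie derivative along $Z_k$ reduces to $\partial/\partial C_k$-differentiation of component functions. This single observation does almost all of the work.

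For part $(i)$, the Casimir property of $C_1,\dots,C_m$ with respect to $\{\,,\,\}_1$ forces the bivector $\pi_1$ to have no components along $\partial_{C_k}$, and its remaining components are $\{p_i,q_i\}_1=f_i(q_i,p_i)$ together with $\{q_i,q_j\}_1=\{p_i,p_j\}_1=0$. All of these are independent of $C_k$, so $\mathrm{Lie}_{Z_k}\pi_1=0$ is immediate.

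For part $(ii)$, expand $\pi_2$ as the sum of the quasi-canonical block $-q_i\,\partial_{p_i}\wedge\partial_{q_i}$ and three mixed blocks with coefficients $\{q_i,C_l\}_2$, $\{p_i,C_l\}_2$, $\{C_k,C_l\}_2$. Only the mixed blocks can depend on $C$, so $\mathrm{Lie}_{Z_k}\pi_2$ is obtained from them by $\partial/\partial C_k$-differentiation. On the other side, expand $X_j$ in the same basis; since $Z_k$ has constant coefficients one has $[X_j,Z_k]=-\partial X_j/\partial C_k$, the vector field obtained by $C_k$-differentiating the coefficients of $X_j$. Forming $\sum_j Z_j\wedge[X_j,Z_k]$ then produces the same three blocks, and a block-by-block comparison establishes \eqref{symZQ}. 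The main obstacle lies in the $C$-$C$ block: the contributions with $j<l$ and $j>l$ in the resulting double sum $\sum_{j,l}\partial_{C_k}\{C_l,C_j\}_2\,\partial_{C_j}\wedge\partial_{C_l}$ must be reorganised, via the antisymmetry of $\{\,,\,\}_2$, into the single $k<l$ summation appearing in $\mathrm{Lie}_{Z_k}\pi_2$, and a consistent sign convention for the Hamiltonian vector field $X_j$ (whether $\{C_j,\cdot\}_2$ or $\{\cdot,C_j\}_2$) is essential for the two sides to match.

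For part $(iii)$, apply the derivation $Z_k$ to the separation relation $\Phi_i(q_i,p_i,I_1,\dots,I_n,C_1,\dots,C_m)=0$. Using \eqref{defZ1}, \eqref{defZ2} together with the assumed linearity of $\Phi_i$ in $I_j$ and $C_r$, the partials $\partial\Phi_i/\partial I_j$ and $\partial\Phi_i/\partial C_k$ depend only on $(q_i,p_i)$, whence
\[
\sum_{j=1}^n\frac{\partial\Phi_i}{\partial I_j}(q_i,p_i)\,Z_k(I_j)+\frac{\partial\Phi_i}{\partial C_k}(q_i,p_i)=0.
\]
A second application of $Z_l$ kills the inhomogeneous $C$-term (its coefficient depends only on $q_i,p_i$) and leaves the homogeneous system $\sum_j(\partial\Phi_i/\partial I_j)\,Z_lZ_k(I_j)=0$ for $i=1,\dots,n$. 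Since the separation relations implicitly determine $I_1,\dots,I_n$ as functions of the remaining coordinates, the Jacobian $[\partial\Phi_i/\partial I_j]$ is nonsingular, which forces \eqref{anHam}.
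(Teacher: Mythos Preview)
The paper does not supply its own proof of this theorem; it is stated with a citation to \cite{SkrAsSoVExCle} and no argument is given here. So there is nothing in the present paper to compare your proposal against directly. Your overall strategy---pass to the adapted chart $(q,p,C)$ in which $Z_k=\partial_{C_k}$ and reduce everything to $\partial_{C_k}$-differentiation of component functions---is the natural one, and it dispatches parts~$(i)$ and~$(iii)$ cleanly.

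There is, however, a genuine gap in your handling of the $C$--$C$ block in part~$(ii)$. You claim that the contributions with $j<l$ and $j>l$ in the double sum can be ``reorganised, via the antisymmetry of $\{\,,\,\}_2$'', to match $\mathrm{Lie}_{Z_k}\pi_2$. But carrying this out explicitly gives a factor of two: the summand $\partial_{C_k}\{C_l,C_j\}_2\,\partial_{C_j}\wedge\partial_{C_l}$ is \emph{symmetric} in $(j,l)$ (both the Poisson bracket and the wedge change sign under the swap), so the unrestricted sum equals $2\sum_{j<l}\partial_{C_k}\{C_j,C_l\}_2\,\partial_{C_j}\wedge\partial_{C_l}$, whereas $\mathrm{Lie}_{Z_k}\pi_2$ contributes only the single sum. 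A two-line check with $m=2$ and $\{C_1,C_2\}_2=C_1$ confirms the mismatch. The identity \eqref{symZQ} therefore needs the additional input that $\partial_{C_k}\{C_j,C_l\}_2=0$, for instance that the Casimirs of $\{\,,\,\}_1$ Poisson-commute under $\{\,,\,\}_2$. This is true in the Gelfand--Zakharevich framework adopted immediately after the theorem (and holds in the concrete model of Sections~\ref{sec3}--\ref{sec5}, where $\{C_i,C_j\}_2=0$ is verified directly), but it does not follow from antisymmetry alone. Your plan should either invoke this fact explicitly and say where it comes from, or restrict the claim accordingly.
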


\begin{Remark}
The equalities \eqref{symZP},
\eqref{symZQ}, \eqref{anHam} have been introduced from different considerations
in \cite{MFP}. It was proven later in \cite{FP} that together with
the normalization conditions \eqref{defZ2} they are {\it sufficient} for
the variable separation. This is a basis of the
bi-Hamiltonian method in SoV theory \cite{MFP,FP,FP2}.
\end{Remark}

\begin{Remark}
 Note that the condition \eqref{anHam} implies that the variable separation is of {\it St\"ackel type}, i.e., all equations of separation are {\it linear} in the integrals of motion.
\end{Remark}

\subsubsection{ The separating polynomial}
 Using the above-defined
vector fields $Z_i$, one can define the separating polynomial, whose roots
 are the coordinates of separation. For this purpose, it is
necessary to define the so-called Poisson pencil, i.e., the
linear combination of the brackets $\{\ ,\ \}_1$ and $\{\ ,\
\}_2$,
$
\{\ ,\ \}_u=u \{\ ,\ \}_1+ \{\ ,\ \}_2$.
We will hereafter assume the Gelfand--Zakharevich settings
\cite{GZ}, i.e., we will assume that the Casimir functions of $\{\ ,\ \}_u$ are
polynomial in $u$. Let us denote these Casimir functions by~$C_k(u)$, $k\in \{1, \dots, m\}$. Observe that the functions
$C_k(u)$ are generating functions of the integrals and Casimirs of
$\{\ ,\ \}_1$ and $\{\ ,\ \}_2$, set of integrals
$I_1, \dots , I_n$ is decomposed into subsets~${\{I_{l_1}, I_{l_2},
\dots , I_{l_{n_l}}\}}$, $l\in \{1, \dots, m\}$, in a certain specific way
for each bi-Hamiltonian pair of bracket and enter into the
functions $C_l(u)$ as follows:
\begin{equation}\label{Casu}
C_l(u)=u^{n_l}C_l+ u^{n_l-1} I_{l_1}+ u^{n_l-2} I_{l_2}+ \dots +
I_{l_{n_l}}, \qquad l\in \{1,\dots, m\},
\end{equation}
so that $n_1+n_2+ \dots +n_m=n$ and $I_{l_{n_l}}$ is a Casimir of $\{\
,\ \}_2$.

The following theorem holds true \cite{MFP,FP,FP2}.
\begin{Theorem}\label{MFP}
Let the vector fields $Z_i$ satisfy the conditions \eqref{symZP}, \eqref{symZQ}, \eqref{anHam} and normalization conditions
\eqref{defZ2}.
Let the roots $u=q_i$, $i\in \{1,\dots, n\}$, of the equation
\[
S(u)=\mathrm{det}(Z_i(C_j(u)))=0, \qquad i,j \in \{1,\dots, m\},
\]
be functionally independent on generic symplectic leaves of $\{\ ,\ \}_1$. Then $q_i$, $i\in \{1, \dots, n\}$, are the coordinates of separation for the considered
bi-Hamiltonian system.
\end{Theorem}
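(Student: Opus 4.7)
The plan is to verify, for the functions $q_i$ defined implicitly by $S(q_i)=0$, the three defining properties of a system of separated coordinates: (a)~mutual Poisson commutativity $\{q_i,q_j\}_1=0$; (b)~existence of conjugate momenta $p_i$ satisfying the canonical brackets of Section~\ref{general}; and (c)~satisfaction of equations of separation of the form~\eqref{sepeq0}.

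The starting point is that $S(q_i)=0$ is equivalent to the linear dependence of the rows of the $m\times m$ matrix $M^{(i)}_{kj}:=Z_k(C_j(u))|_{u=q_i}$, so that a nontrivial left null-vector $(\alpha^{(i)}_1,\dots,\alpha^{(i)}_m)$ exists and yields, for each~$j$, the identity
\[
\sum_{k=1}^m\alpha^{(i)}_k\, Z_k(C_j(u))\big|_{u=q_i}=0, \qquad j\in\{1,\dots,m\}.
\]
Since by \eqref{Casu} each $C_j(u)$ is polynomial in $u$ with coefficients given by the integrals $I_{j_\alpha}$ and the Casimir $C_j$, these identities rearrange into the separation equations $\Phi_i(q_i,p_i,I_1,\dots,I_n,C_1,\dots,C_m)=0$ once $p_i$ is identified with a normalised entry of the null-vector $(\alpha^{(i)}_1,\dots,\alpha^{(i)}_m)$, equivalently with a ratio of cofactors of $M^{(i)}$. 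Linearity of $\Phi_i$ in the integrals is precisely what condition \eqref{anHam} provides: it forces the $Z_k$-derivatives of each integral $I_l$ to be independent of the integrals themselves.

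For~(a), the symmetry identity \eqref{symZP} allows each $Z_k$ to be pushed through $\{q_i,q_j\}_1$; combined with the Lenard--Magri-type relations for the pencil Casimirs $C_j(u)$ and the normalisation \eqref{defZ2}, this reduces $\{q_i,q_j\}_1$ to an algebraic combination that vanishes on generic symplectic leaves. For~(b), the mixed-derivative condition \eqref{symZQ} is the decisive input: the wedge-product correction $\sum_j Z_j\wedge[X_j,Z_i]$ encodes the precise deviation of $Z_i$ from being a symmetry of $\{\ ,\ \}_2$, and, combined with the pencil recursion and with \eqref{symZP}, it yields the canonical relations $\{p_i,q_j\}_1=f_i(q_i,p_i)\delta_{ij}$ and $\{p_i,p_j\}_1=0$ for the momenta defined above.

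The main obstacle is the momentum construction together with the verification of these canonical brackets. Translating \eqref{symZQ} into explicit statements of the form $\{p_i,q_j\}_1\propto\delta_{ij}$ and $\{p_i,p_j\}_1=0$ requires careful bookkeeping of the bi-Hamiltonian recursion generated by the pencil Casimirs $C_j(u)$ and of the interaction of the~$Z_k$ with both Poisson structures simultaneously; this is the technical heart of the MFP theorem. Once the momenta are in place, the separation equations \eqref{sepeq0} follow by direct substitution into the null-vector identities above, with the explicit form of $\Phi_i$ read off from the cofactor structure of $M^{(i)}$.
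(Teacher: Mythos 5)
There is a genuine gap: your text is a plan rather than a proof. The two hard points --- involutivity of the roots $q_i$ and the existence of conjugate momenta satisfying the (quasi)canonical brackets --- are asserted, not established: ``allows each $Z_k$ to be pushed through'', ``reduces to an algebraic combination that vanishes'', ``yields the canonical relations'' are restatements of the conclusions, and you concede this yourself when you write that translating \eqref{symZQ} into $\{p_i,q_j\}_1\propto\delta_{ij}$ ``is the technical heart of the MFP theorem''. Moreover, the central device you propose --- identifying $p_i$ with a normalised entry of the left null-vector (a ratio of cofactors) of the matrix $M^{(i)}_{kj}=Z_k(C_j(u))|_{u=q_i}$ --- is unsupported and, as stated, does not work: the null-vector relations $\sum_k\alpha^{(i)}_kZ_k(C_j(u))|_{u=q_i}=0$ involve functions on the whole Poisson manifold, and nothing in your argument shows that they close on the single pair $(q_i,p_i)$ together with the integrals and Casimirs, which is precisely what an equation of separation of the form \eqref{sepeq0} requires. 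Condition \eqref{anHam} by itself only controls linearity in the integrals; it does not produce the reduction to one pair of variables per relation, nor does it make the cofactor ratios canonically conjugate to the $q_i$.

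Note also that the paper itself does not prove this theorem: it quotes it from the references \cite{MFP,FP,FP2}, where the argument is of a different nature. There, conditions \eqref{symZP}, \eqref{symZQ} and the normalizations \eqref{defZ2} guarantee that the distribution spanned by the $Z_i$ is transversal to the symplectic leaves of $\{\ ,\ \}_1$ and that both Poisson structures project to the leaf, endowing it with a Poisson--Nijenhuis structure; the roots of $\det(Z_i(C_j(u)))$ are the eigenvalues of the recursion operator, i.e., half of a Darboux--Nijenhuis chart (whence their involutivity and the existence of conjugate momenta), and the separation relations follow because the projected Gelfand--Zakharevich polynomials \eqref{Casu}, evaluated at $u=q_i$, depend only on the $i$-th pair, with \eqref{anHam} ensuring the St\"ackel (linear) form. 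Your sketch does not engage with this reduction machinery, and without it (or an equivalent substitute) the key steps remain unproven.
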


\subsection[The algebraic vector field Z: the non-St\"ackel case]{The algebraic vector field $\boldsymbol{Z}$: the non-St\"ackel case}
\subsubsection[The algebraic vector field Z]{The algebraic vector field $\boldsymbol{Z}$}
In the case of bi-Hamiltonian SoV, the integrals and Casimir functions enter into the equations of separation via the Casimirs $C_i(u)$ , $i\in \{1,\dots, m\}$ (see formula \eqref{Casu}) of the Poisson pencil~$\{\ , \}_u$
\[
\Phi_{i}(q_i,p_i, I_1, \dots ,I_n, C_1, \dots ,C_m)=\Phi_{i}(q_i,p_i, C_1(q_i), \dots ,C_m(q_i))=0, \qquad i\in \{1,\dots, n\}.
\]
The non-St\"ackel SoV means in this context that certain Casimirs of the Poisson pencil, say~$C_i(q_i)$, $i\in \{1,\dots, r\}$ enter into $\Phi_{i}(q_i,p_i, C_1(q_i), \dots ,C_m(q_i))$ in a nonlinear way. We call these Casimirs to be ``non-St\"ackel''.

The following proposition holds true \cite{SkrAsSoVExCle}.

 \begin{Proposition}\label{Z}
 Let $q_i$, $p_i$, $i\in \{1,\dots, n\}$, be separated coordinates. Let the corresponding equations of separation be nonlinear in the Casimir functions of the Poisson pencil $C_1(q_i),\dots ,\allowbreak C_r(q_i)$ and linear in all other integrals and Casimir functions. Let $Z$ be an invariance vector field: $Z(q_i)=Z(p_i)=0$, $i\in \{1,\dots, n\}$. Let the following conditions also hold true:
 \begin{equation}\label{condZ}
 Z(C_1(u))= \dots =Z(C_r(u))=0.
 \end{equation}

 Then
\begin{itemize}\itemsep=0pt
\item[$(i)$] If
 $
n_1+n_2+ \dots +n_r=m-r-1$,
then the vector field $Z$ is unambiguously defined in terms of the vector fields $Z_i$ by the conditions \eqref{condZ} and the normalization condition $Z(C_m)=1$.

\item[$(ii)$] If, moreover, $Z(Z(C_{r+1}))=\dots =Z(Z(C_m))=0$, then the square of the vector field $Z$ annihilates all the integrals and Casimir functions
 \begin{equation}\label{quadZ1}
 Z(Z(C_1(u)))= \dots =Z(Z(C_m(u)))=0.
 \end{equation}
 \end{itemize}
 \end{Proposition}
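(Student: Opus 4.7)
Since $Z$ annihilates every $q_i$ and $p_j$, in the coordinate system $(q_1,\dots,q_n,p_1,\dots,p_n,C_1,\dots,C_m)$ on $\mathcal{P}$ the defining relations \eqref{defZ} identify each $Z_k$ with $\partial/\partial C_k$, so $Z$ must take the form $Z=\sum_{k=1}^m a_k Z_k$ with $a_k=Z(C_k)$ functions on $\mathcal{P}$. Both parts of the proposition then become linear-algebra statements about the coefficients $a_k$.

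\emph{Part (i).} Substitute $Z=\sum_k a_k Z_k$ into \eqref{condZ} and expand using \eqref{Casu}:
\[
Z(C_l(u)) \,=\, u^{n_l} a_l + \sum_{j=1}^{n_l} u^{n_l-j}\,Z(I_{l_j}) \,=\, 0,\qquad l=1,\dots,r.
\]
Equating coefficients of each power of $u$ gives $\sum_{l=1}^{r}(n_l+1)=m-1$ scalar equations (using the hypothesis $n_1+\dots+n_r=m-r-1$), and together with the normalization $a_m=1$ this is a square $m\times m$ linear system for $a_1,\dots,a_m$. The leading terms immediately force $a_l=0$ for $l\le r$, and the remaining $(m-r)\times(m-r)$ sub-system for $a_{r+1},\dots,a_m$ is generically non-degenerate by the functional independence of the integrals $I_{l_j}$, yielding $Z$ uniquely.

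\emph{Part (ii).} For $l\le r$, $Z(C_l(u))=0$ already implies $Z^2(C_l(u))=0$. For $s>r$, write
\[
Z^2(C_s(u)) \,=\, u^{n_s}\,Z(a_s) + \sum_{j=1}^{n_s} u^{n_s-j}\,Z^2(I_{s_j}),
\]
whose leading term vanishes by the hypothesis $Z^2(C_s)=Z(a_s)=0$. To kill the lower terms $Z^2(I_{s_j})$, apply $Z$ twice to each separation equation $\Phi_i=0$. Since $\Phi_i$ is linear in $C_{r+1}(q_i),\dots,C_m(q_i)$, and since $Z$ annihilates $q_i$, $p_i$ and the non-St\"ackel Casimirs $C_l(q_i)$ for $l\le r$, the coefficients of that linear part are themselves $Z$-invariant, giving
\[
\sum_{s=r+1}^m A_{i,s}\bigl(q_i,p_i,C_1(q_i),\dots,C_r(q_i)\bigr)\,Z^2(C_s(q_i)) \,=\, 0,\qquad i=1,\dots,n.
\]
Viewed as a linear system in the $\sum_{s>r}n_s=n-m+r+1$ remaining coefficients of the polynomials $Z^2(C_s(u))$, evaluated at the generically distinct points $u=q_i$, this is a Vandermonde-type system with $n\ge n-m+r+1$ rows; its full column rank then forces every $Z^2(I_{s_j})$ to vanish, completing \eqref{quadZ1}.

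\emph{Main obstacle.} The hard part in both items is the non-degeneracy of the resulting linear systems, and this is where the counting $n_1+\dots+n_r=m-r-1$ earns its keep in part~(i): it balances the number of equations against unknowns so that invertibility of $\bigl(Z_k(I_{l_j})\bigr)$ becomes a genericity statement equivalent to the functional independence of the integrals on a generic symplectic leaf. In part~(ii), the analogous Vandermonde rank condition requires the $q_i$ to be distinct and the $A_{i,s}$ to be non-vanishing at a generic point; both are expected for the bi-Hamiltonian systems to which the proposition will be applied, but must be verified from the explicit form of the equations of separation rather than purely formally.
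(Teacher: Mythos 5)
The paper does not actually prove Proposition \ref{Z} here: it is quoted from the earlier work \cite{SkrAsSoVExCle}, so there is no in-text proof to compare against. Your argument, however, is the natural one and matches what the statement is clearly designed to encode: since $Z$ kills all $q_i,p_j$, in the coordinates $(q,p,C)$ of \eqref{defZ} one has $Z=\sum_k a_k Z_k$ with $a_k=Z(C_k)$, and expanding \eqref{condZ} in powers of $u$ via \eqref{Casu} gives $a_l=0$ for $l\le r$ plus $n_1+\dots+n_r=m-r-1$ homogeneous conditions on $a_{r+1},\dots,a_m$, which together with $Z(C_m)=1$ is exactly a square system; this is the intended content of item $(i)$. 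Two justifications you give are weaker than they should be, though. In $(i)$, ``functional independence of the integrals'' is not what makes the matrix $\bigl(Z_k(I_{l_j})\bigr)_{k>r}$ invertible; this is a separate genericity assumption (implicit in the word ``unambiguously''), and it is cleaner to say that the homogeneous system generically has a one-dimensional kernel which the normalization then fixes. In $(ii)$, your structural step is correct (the coefficients of the linear part are $Z$-invariant because they depend only on $q_i$, $p_i$ and $C_1(q_i),\dots,C_r(q_i)$, all killed by $Z$, and the leading terms die by the hypothesis $Z(Z(C_s))=0$), but the concluding rank claim is not a pure Vandermonde statement: with more than one St\"ackel-type Casimir polynomial carrying subleading integrals, ``$q_i$ distinct and $A_{i,s}\neq 0$'' does not by itself give full column rank of the weighted block matrix $A_{i,s}q_i^{\,n_s-j}$. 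The condition you actually need is that the separation relations can be solved for the St\"ackel-type integrals $I_{s_j}$, $s>r$, i.e., generic full column rank of $\partial\Phi_i/\partial I_{s_j}$; equivalently, one solves the relations to write each $I_{s_j}$ as an affine-linear combination of $C_{r+1},\dots,C_m$ with $Z$-invariant coefficients, whence $Z(Z(I_{s_j}))=\sum_{s'>r}\gamma_{s'}Z(Z(C_{s'}))=0$ follows at once from the hypothesis. This solvability is the natural nondegeneracy built into calling the $\Phi_i=0$ equations of separation (and it does hold in the paper's application, where only $C_2(u)$ carries subleading integrals $H$, $K$), so your proof is essentially right once that condition is stated in place of the Vandermonde heuristic.
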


We need also to formulate the following important conjecture.
\begin{Conjecture}\label{conj2} Let the space $\mathcal{P}$ coincides with dual space of a Lie algebra $\mathfrak{g}$ and Poisson brackets $\{\ ,\ \}_1$ coincide with a standard Lie--Poisson brackets on $\mathfrak{g}^*$.
Then among the invariance vector fields, i.e., vector fields such that $Z(q_i)=Z(p_i)=0$, $i\in \{1, \dots, n\}$, there exists a~special vector field $Z$
 which, acting in the system of the natural Lie-algebraic coordinates $($linear coordinate functions of the space $\mathfrak{g}^*)$ annihilates its own components.
\end{Conjecture}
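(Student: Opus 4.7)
The plan is to work in the natural linear coordinates $x_a$ dual to a basis $\{e_a\}$ of $\mathfrak{g}$, and to seek an invariance vector field in the form $Z=\sum_a A_a(x)\,\partial_{x_a}$. The algebraicity property then translates into the quadratic first-order system
\[
Z(A_a)=\sum_b A_b\,\frac{\partial A_a}{\partial x_b}=0,\qquad a\in\{1,\dots,\dim\mathfrak{g}\}.
\]
Any invariance vector field admits the decomposition $Z=\sum_{i=1}^m g_i\,Z_i$ with $g_i=Z(C_i)$, so the unknowns can be repackaged as $m$ scalar functions $g_i$ on $\mathfrak{g}^*$.

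First, I would exploit the fact that for the standard Lie--Poisson bracket the Hamiltonian vector field of any polynomial function is polynomial in the $x_a$, with coefficients built from the structure constants $c_{ab}^c$ of $\mathfrak{g}$. Inverting the change of coordinates between $(x_a)$ and $(q_i,p_i,C_j)$ then expresses the generating fields $Z_i$ rationally in the $x_a$, and after clearing a common denominator the condition $Z(A_a)=0$ reduces to a finite system of polynomial identities in the $x_a$ whose coefficients depend on the $g_i$ and their derivatives.

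Second, I would combine this with the conditions of Proposition~\ref{Z}, which pick out (under the hypothesis $n_1+\cdots+n_r=m-r-1$) a unique $Z$ entering Theorem~\ref{MFP}. The Conjecture then becomes the assertion that this particular $Z$ automatically satisfies the quadratic algebraicity identities. I would attempt to prove this by re-expressing the rational formula for $Z$ using Lie-theoretic identities---Jacobi for $c_{ab}^c$ and the ad-invariance of the Casimirs~$C_j$---and showing that the numerator of $Z(A_a)$ vanishes after these reductions.

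The main obstacle will be carrying out this cancellation in a uniform way. The algebraicity system is heavily overdetermined, and without further structural input---for instance a loop-algebra or classical $R$-matrix presentation of the bi-Hamiltonian pair, in which an algebraic~$Z$ arises naturally as a residue of a shift vector field in the spectral parameter---I see no conceptual reason why the cancellation should always take place for an arbitrary Lie algebra~$\mathfrak{g}$. I therefore anticipate that, short of identifying such a general mechanism, the Conjecture must be established case by case, as is done in Section~\ref{sec4} for the extended Clebsch--Manakov algebra, rather than by a single dimension-counting or structural argument.
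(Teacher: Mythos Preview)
The statement you are addressing is labelled in the paper as a \emph{Conjecture}, not a Theorem or Proposition, and the paper offers no proof of it. There is therefore nothing to compare your proposal against: the author explicitly leaves the general claim open and only \emph{assumes} it (``Hereafter, we will naturally assume that the vector field $Z$ described in Proposition~\ref{Z} and the vector field $Z$ from the Conjecture~\ref{conj2} are the same vector fields'') in order to turn the differential conditions into algebraic ones. The verification is then carried out only for the specific nine-dimensional algebra of the extended Clebsch--Manakov model in Section~\ref{sec4}.

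Your own final paragraph correctly diagnoses the situation: you see no uniform mechanism forcing the cancellation, and you anticipate a case-by-case verification. That is exactly the status the paper assigns to the statement. The earlier part of your proposal, however, reads as if you intend to prove the Conjecture in general by clearing denominators and invoking Jacobi and ad-invariance; this is not a proof but a restatement of the problem, and you should not present it as a plan that is expected to succeed. In short: there is no gap to fill here, because the paper does not claim a proof --- the appropriate response is simply to note that the statement is conjectural and that the paper supplies only the explicit instance in Section~\ref{sec4}.
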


Hereafter, we will naturally assume that the vector field $Z$ described in Proposition \ref{Z} and the vector field $Z$ from the Conjecture \ref{conj2} are the same vector fields. This will transform the differential equations~\eqref{quadZ1} for the components of the vector field $Z$, to the system of quadratic algebraic equations for them, i.e., will give a possibility to reduce the problem of the construction of the vector field $Z$ in terms of the initial variables to the problem of solving system of {\it algebraic} equations for its components in the initial system of coordinates. We will call such the vector field~$Z$ {\it algebraic}.
Let us now consider a dimensional constraint that is imposed by Proposition~\ref{Z} together with the condition that the vector field $Z$ is algebraic.
The following proposition holds true.
\begin{Proposition}
For the algebraic vector fields $Z$, existence of the generic solution of the equation \eqref{quadZ1} satisfying also the condition \eqref{condZ} implies that the number of Casimir functions~$m$ is equal to the number of the integrals of motion $n$, i.e., $m=n$.
\end{Proposition}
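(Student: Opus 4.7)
\emph{Proof plan.} I will argue by a Bezout-type dimension count that balances the number of components of an algebraic $Z$ against the independent algebraic conditions coming from \eqref{condZ}, the normalization $Z(C_m)=1$, and \eqref{quadZ1}. First, record that on $\mathcal{P}=\mathfrak{g}^*$ there are $d=\dim \mathfrak{g}^*=2n+m$ linear coordinate functions (dimension $2n$ of the symplectic leaf plus $m$ transverse Casimir directions), so the algebraic vector field $Z$ is specified by $d=2n+m$ components $Z^i$.

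Next, I would count the relations. The algebraic hypothesis $Z(Z^i)=0$ from Conjecture~\ref{conj2} means that $Z$ annihilates every polynomial in its own components; in particular, applying $Z$ twice to a quadratic Casimir $C_l=\sum_{ij} c_{ij} x_i x_j$ gives $Z(Z(C_l))=2\sum_{ij} c_{ij} Z^i Z^j$, and more generally $Z(Z(C_l(u)))$ is a polynomial in $u$ of degree $n_l$ whose coefficients are quadratic forms in the $Z^i$. Hence \eqref{quadZ1} contributes $\sum_{l=1}^m (n_l+1)=n+m$ quadratic equations on the components. The normalization $Z(C_m)=1$ contributes one linear equation, and \eqref{condZ} contributes $\sum_{l=1}^r (n_l+1)$ further linear equations; invoking the constraint $n_1+\dots+n_r=m-r-1$ from Proposition~\ref{Z}(i), this latter count equals $m-1$, so the linear block totals $m$ relations.

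The crux is then to require that the combined system of $m$ linear and $n+m$ quadratic equations in $d=2n+m$ unknowns admit an isolated, and hence generic, solution. This forces
\[
m+(n+m)=2n+m, \qquad \text{i.e.,}\quad m=n,
\]
precisely the balance realized in the extended Clebsch/Manakov example ($n=m=3$, $d=9$, giving the ``six quadratic equations on nine components'' cited in the introduction). The main obstacle in promoting this heuristic to a rigorous proof will be verifying that the $n+m$ quadratic relations are generically independent both of one another and of the $m$ linear ones, so that the Bezout-type equality truly forces $m=n$ rather than only bounding $m\leq n$. I would expect to handle this by a transversality argument tied to the explicit polynomial structure of the Casimirs $C_l(u)$ of the Poisson pencil, verifiable directly in the concrete models treated in Sections~\ref{sec4} and~\ref{sec5}.
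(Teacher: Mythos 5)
Your proposal is correct and follows essentially the same dimension-counting argument as the paper: $2n+m$ components of $Z$, the $n+m$ quadratic equations \eqref{quadZ1}, the $\sum_{l=1}^r(n_l+1)=m-1$ constraints from \eqref{condZ} via Proposition~\ref{Z}$(i)$, and one normalization, balanced to force $m=n$ (the paper merely stages the count differently, using homogeneity to argue that \eqref{condZ} must fix exactly $n-1$ of the $n$ residual parameters). The genericity/independence caveat you flag is likewise left implicit in the paper's own proof, so no essential step is missing.
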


\begin{proof}
 The number of components of the vector field $Z$ in the initial coordinate system is equal to $2n+m$. The number of equations \eqref{quadZ1} is equal to $n+m$. That is why the generic solution of the system of equations \eqref{quadZ1} is $n$-parametric. On the other hand, the algebraic equations \eqref{quadZ1} and \eqref{condZ} are homogeneous. That is why there is one common multiplicative parameter among $n$ parameters of generic solution of \eqref{quadZ1} not fixed by \eqref{condZ}. It is fixed by the normalization condition $Z(C_m)=1$. All other $n-1$ parameters should be fixed by the constraints \eqref{condZ}. That is why the number of the constraints \eqref{condZ} should be equal to $n-1$, i.e., $(n_1+1)+ (n_2+1)+ \dots +(n_r+1)=n-1$. On the other hand, the Proposition \ref{Z} provides the following dimensional constraint: $n_1+n_2+ \dots +n_r=m-r-1$.
Comparing these two-dimensional constraints, we immediately obtain that $m=n$.
\end{proof}

\begin{Remark}
 Observe, that although the condition $m=n$ is a restrictive one, it is a typical situation for the Lax-integrable systems in the small-rank cases. To generalize this condition in the higher rank cases, one will probably need to consider several vectors fields $Z$ with the above special properties. Let us remark, that the condition $m=n$ has appeared in another context in the recent paper \cite{BlaMar}.
\end{Remark}

\subsubsection{The separating polynomial: the non-St\"ackel case}
Having found the algebraic vector field $Z$ satisfying \eqref{condZ}, \eqref{quadZ1} we can start to look for the separating polynomial. Unfortunately, in the general $m>1$ case, one cannot construct the formula for the separating polynomial with the help of only one vector field $Z$. One can proceed in a systematic way only for certain types of the equations
of separation. That is why we will {\it assume} the following form of the equations of separation:
\begin{gather}
\Phi_{i}(q_i,p_i, C_1(q_i),\dots ,C_m(q_i))\nonumber\\
\qquad{} = \Phi_{i}(q_i,p_i, C_1(q_i),\dots ,C_r(q_i), \phi(q_i,C_{r+1}(q_i),\dots ,C_m(q_i)))
 =0,\label{eqsepbhmf}
\end{gather}
where function $\phi(u,C_{r+1}(u), \dots ,C_m(u))$ is linear in $C_{s}(u)$, $s\in \{r+1,\dots, m\}$, the functions $\Phi_{i}$, $i\in \{1, \dots,n \}$ are linear in $\phi(u,C_{r+1}(u), \dots ,C_m(u))$ and nonlinear in $C_k(q_i)$, $k\in \{1, \dots, r \}$.

Acting on the equations \eqref{eqsepbhmf} by the vector field $Z$ and taking into account that by our construction $Z(q_1)= \dots =Z(q_n)= Z(p_1)= \dots =Z(p_n)=0$, $Z(C_1(u))= \dots =Z(C_r(u))=0$,
we easily obtain that the coordinates $q_i$ should satisfy the following equations:
\begin{equation}\label{seppol}
Z(\phi(q_i,C_{r+1}(q_i), \dots ,C_m(q_i)))=0, \qquad i\in \{1, \dots, n\},
\end{equation}
i.e., that $S(u)=Z(\phi(u,C_{r+1}(u), \dots ,C_m(u)))$ is a separating polynomial in $u$ of degree $n$.

\begin{Remark}
In practice, we do not know a priori the equations of separation \eqref{eqsepbhmf}. That is why we have to {\it construct} the function $\phi$ as polynomial of degree $n$ in $u$ by the linear combination of the polynomials $C_{s}(u)$, $s\in \{r+1, \dots, m \}$ of degree $n_s$ in $u$, where
 $n_1+n_2+ \dots +n_r=m-r-1$, $n_1+n_2+ \dots +n_m=n$, with monomial in $u$ coefficients.
The form of separating polynomial~\eqref{seppol} will be only indicative. Indeed, since we have no {\it closed} differential conditions of the type \eqref{symZP}--\eqref{symZQ} for the {\it unique} vector field $Z$, we have yet to control that the roots $q_i$ of $S(u)$ Poisson-commute and that $q_i$ together with the corresponding conjugated momenta $p_i$ satisfy some equations of separation of the form \eqref{eqsepbhmf}. In the next sections, we will illustrate this on the examples of the extended Clebsh and Manakov models.
\end{Remark}

\section{Three-dimensional extension of the Clebsch model}\label{sec3}
\subsection{The model}
Let us consider nine-dimensional linear space with the
coordinates $S_{\a}$, $T_{\a}$, $W_{\a}$, $\a\in \{1,2,3\}$, which satisfy the
 following Lie--Poisson brackets:
 \begin{subequations}\label{e3lpb}
\begin{gather}
\{ S_{\a}, S_{\b}\}_1=\e_{\a\b\g}(S_{\g}+ j_{\g} W_{\g}), \qquad \{ S_{\a},
T_{\b}\}_1=\e_{\a\b\g}T_{\g},\qquad \{ S_{\a}, W_{\b}\}_1= \e_{\a\b\g} W_{\g},\label{e3lpb1}
\\
\{ T_{\a}, T_{\b}\}_1=\e_{\a\b\g} W_{\g}, \qquad \{ T_{\a},
W_{\b}\}_1=0,\qquad \{ W_{\a}, W_{\b}\}_1= 0,\label{e3lpb2}
\end{gather}
\end{subequations}
where $j_{\a}$, $\a\in \{1,2,3\}$, are arbitrary parameters, and we will hereafter assume
 that $j_{\a}\neq j_{\b}$ if~${\a\neq \b}$.

These brackets possess three Casimir functions
\begin{equation}\label{casimir}
C_3= \sum_{\a=1}^3 W^2_{\a},\qquad C_2= \sum_{\a=1}^3 \big(j_{\a} W^2_{\a} + 2 W_{\a}S_{\a} +T_{\a}^2 \big) \qquad C_1= 2\sum_{\a=1}^3T_{\a}W_{\a}.
\end{equation}
Let us consider the following Hamiltonian:
\begin{equation}\label{h}
H =\sum\limits_{\a=1}^3 \big(S_{\a}^2+ (j_{\b}+j_{\g}) T^2_{\a}+ 2 j_{\a} S_{\a} W_{\a}\big),
\end{equation}
where the indices $\a$, $\b$, $\g$ hereafter denote cyclic permutation of the indices $1$, $2$, $3$, i.e., for $\a=1$ we have that $\b=2$, $\g=3$, for $\a=2$ we have that $\b=3$, $\g=1$, for $\a=3$ we have that $\b=1$, $\g=2$.

As it is easy to check, it possesses two quadratic integrals
 of the following explicit form:
\begin{gather}
K= \sum\limits_{\a=1}^3 \big(j_{\a} S_{\a}^2+ j_{\b}j_{\g} T^2_{\a}\big), \label{k}
\\
L=2\sum_{\a=1}^3T_{\a}S_{\a}.\label{l}
\end{gather}
The model is integrable in the sense of Liouville: the dimension
of the generic symplectic leaf (level set of three Casimir
functions) is six and we have three independent Poisson commuting integrals
$\{H,K\}_1=0$, $ \{H,L\}_1=0$, $ \{K,L\}_1=0$
which is sufficient for the complete integrability of the
considered Hamiltonian system. Observe, that from the point of
view of the notations of the previous section we have that $n=m=3$ in this case.

\begin{Remark}
 Note that the Poisson algebra \eqref{e3lpb} possesses an ideal spanned by $\{W_{\a}\mid \a\in \{1,2,3\} \}$. Factorizing over this ideal, i.e., putting $W_{\a}=0,$ $\a\in \{1,2,3\}$, we obtain that the Lie--Poisson brackets \eqref{e3lpb} become the standard Lie--Poisson brackets on $\mathfrak{e}^*(3)$, the Casimir functions~$C_3$ and $C_1$ turn zero, the Casimir $C_2$ and integral $L$ become Casimir functions on~$\mathfrak{e}^*(3)$ and the integrals $H$ and $K$ become the Hamiltonian and integral of motion of the famous Clebsch model \cite{Clebsch}. That is why we call the integrable system considered in the present section to be {\it three-dimensional extension of the Clebsch model}.
\end{Remark}

\subsection{The bi-Hamiltonian structure and the Poisson pencil}
It is possible to show that there is a second Poisson structure
for our extended Clebsch system, compatible with the first one, standing behind the integrability of the model and having
the form
 \begin{subequations}\label{so4lpb}
\begin{gather}
\{ S_{\a}, S_{\b}\}_2=\e_{\a\b\g} j_{\g} S_{\g}, \qquad \{ S_{\a},
T_{\b}\}_2=\e_{\a\b\g} j_{\b} T_{\g},\qquad \{ T_{\a}, T_{\b}\}_2= \e_{\a\b\g} S_{\g},\label{e4lpb1}
\\
\{ T_{\a}, W_{\b}\}_2=0, \qquad \{ S_{\a},
W_{\b}\}_2=0,\qquad \{ W_{\a}, W_{\b}\}_2= - \e_{\a\b\g} W_{\g}.\label{so4lpb2}
\end{gather}
\end{subequations}
The function $C_3$ is a common Casimir function
for the both brackets $\{\ , \ \}_1$ and $\{\ , \ \}_2$, the other two Casimir functions of the brackets $\{\ , \ \}_2$ are the functions $K$ and $L$. The functions $H$, $C_1$ and $C_2$ are the Poisson-commuting integrals of motion with respect to $\{\ , \ \}_2$
$\{H,C_1\}_2=0$, $ \{H,C_2\}_2=0$, $ \{C_1,C_2\}_2=0$.
Due to the compatibility of the brackets $\{\ , \ \}_1$ and $\{\ , \ \}_2$, it is possible to consider the so called Poisson pencil of the brackets $\{\ , \ \}_1$ and $\{\ , \ \}_2$,
$
\{\ , \ \}_u= u\{\ , \ \}_1+ \{\ , \ \}_2$.
The function $C_3$ is a~Casimir
of $\{\ , \ \}_u$. The second and third Casimirs of $\{\ , \ \}_u$ are the functions
$
C_2(u)= u^{2}C_2+u H+K$,
$
C_1(u)= u C_1+L$.
They will be used below while constructing separating polynomial and equations of separation. We will also use the following cubic in $u$ combination of the Casimirs $C_3(u)\equiv C_3$ and $C_2(u)$:
\[
\phi(u,C_2(u),C_3)= u^3 C_3 +C_2(u)=u^3 C_3+ u^{2}C_2+u H+K.
\]

\begin{Remark}
 Observe that in the case $j_{\a}\neq 0$, $\a\in \{1,2, 3\}$, the Poisson algebra \eqref{so4lpb} is isomorphic to $\mathfrak{so}(4)\oplus \mathfrak{so}(3)$. The isomorphism is achieved by the following substitution of variables
\smash{$S_{\a}\rightarrow \sqrt{j_{\b}} \sqrt{j_{\g}} S_{\a}$}, \smash{$ T_{\a}\rightarrow \sqrt{j_{\a}} T_{\a}$}, $ W_{\a}\rightarrow - W_{\a}$, $ \a\in \{1,2,3\}$.
In such a way, the considered model is isomorphic to the so-called Manakov or Schottky--Frahm model on $\mathfrak{so}(4)$ interacting with anisotropic $\mathfrak{so}(3)$ Euler top.
Due to the isomorphism $\mathfrak{so}(4)\simeq \mathfrak{so}(3)\oplus \mathfrak{so}(3)$, the Poisson algebra \eqref{so4lpb} is isomorphic to $\mathfrak{so}(3)\oplus \mathfrak{so}(3)\oplus \mathfrak{so}(3)$ and the corresponding integrable system is equivalent to the system of three interacting anisotropic tops. It is possible to show that this system is equivalent to $N=3$ elliptic Gaudin model, but we will not develop this line in the present article.
\end{Remark}

\section[The algebraic vector field Z]{The algebraic vector field $\boldsymbol{Z}$}\label{sec4}
In this section, we will solve the equations
\eqref{quadZ1}
\begin{alignat}{4}
&Z(Z(H))=0,\qquad &&Z(Z(K))=0, \qquad&& Z(Z(L))=0,&\nonumber\\ &Z(Z(C_1))=0, \qquad&& Z(Z(C_2))=0, \qquad&& Z(Z(C_3))=0&\label{annul}
\end{alignat}
under the additional assumption that the vector field $Z$ is algebraic.

The general vector field on the three-dimensional extension of $\mathfrak{e}^*(3)$ is written as follows:
\[
Z=\sum\limits_{\a=1}^3\left( A_{\a}\frac{\partial}{ \partial S_{\a}}+
B_{\a}\frac{\partial}{ \partial T_{\a}}+ D_{\a}\frac{\partial}{ \partial W_{\a}}\right),
\]
where $A_{\a}$, $B_{\a}$, $D_{\a}$, $\a\in \{1,2,3\}$, are certain
functions on $\mathfrak{e}^*(3)$.

In order to solve the equations \eqref{annul}, we will impose
 an additional restriction that vector field~$Z$ is algebraic, i.e., annihilates its own components
\begin{equation}\label{ZAB}
Z(A_{\a})=0, \qquad Z(B_{\a})=0, \qquad Z(D_{\a})=0, \qquad \a\in \{1,2,3\}.
\end{equation}
This condition transforms the system of the differential equations \eqref{annul}
for the vector field $Z$ in the system of the
algebraic equations for its components $A_{\a}$, $B_{\a}$, $D_{\a}$, $\a\in \{1,2,3\}$.

For the future, we need to introduce the following ``elliptic'' constants $c_{\a}$:
\begin{equation}\label{calfa}
c^2_{\a}=j_{\b}-j_{\g}, \qquad \a\in \{1,2,3\}.
\end{equation}

The following proposition holds true.
\begin{Proposition}
The vector field $Z$ with the following components:
\begin{subequations}\label{A-B}
\begin{gather}
A_{\a} =\l c_{\a} \big(x^2+(j_{\b}+j_{\g}-j_{\a}) y^2+2 j_{\b} j_{\g} y + j_{\a} j_{\b} j_{\g}\big),\label{Ai}
\\
B_{\a} = 2 \l c_{\a} x (y+j_{\a}),\label{Bi}
\\
D_{\a} = \l c_{\a} \big(y^2+2j_{\a} y+ j_{\a}(j_{\b}+j_{\g})-j_{\b}j_{\g}\big) ,\label{Di}
\end{gather}
\end{subequations}
where $\l$, $x$ and $y$ are arbitrary functions annihilated
by $Z$, i.e., $Z(x)=0$, $Z(y)=0$, $Z(\l)=0$, is a solution of the system
of equations \eqref{annul} possessing the property \eqref{ZAB}.
\end{Proposition}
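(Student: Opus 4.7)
The plan is to reduce both assertions to finite, direct algebraic checks.

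\textbf{Step 1 (the algebraic condition \eqref{ZAB}).} Inspection of \eqref{A-B} shows that $A_\a$, $B_\a$, and $D_\a$ are polynomials in $x$, $y$, $\l$ whose coefficients depend only on the constants $j_1, j_2, j_3$ (and $c_\a$). Since $Z$ is a derivation and $Z(x)=Z(y)=Z(\l)=0$ by hypothesis, it annihilates every such polynomial, so $Z(A_\a)=Z(B_\a)=Z(D_\a)=0$ follows at once.

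\textbf{Step 2 (reducing \eqref{annul} to bilinear identities).} Each of $H,K,L,C_1,C_2,C_3$ is a quadratic polynomial in the coordinates $S_\a, T_\a, W_\a$. Applying the Leibniz rule twice, using $Z(S_\a)=A_\a$, $Z(T_\a)=B_\a$, $Z(W_\a)=D_\a$ together with Step~1, every term containing an undifferentiated coordinate collapses and $Z(Z(I))$ becomes a bilinear expression in the components alone. Explicitly,
\[
Z(Z(C_3))=2\sum_\a D_\a^2,\qquad Z(Z(L))=4\sum_\a A_\a B_\a,\qquad Z(Z(C_1))=4\sum_\a B_\a D_\a,
\]
together with analogous formulas for $Z(Z(H))$, $Z(Z(K))$, $Z(Z(C_2))$ involving also the weights $j_\a$, $j_\b j_\g$, and $j_\b+j_\g$. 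It therefore suffices to show that each of these six cyclic sums vanishes identically as a polynomial in $x$ and $y$ (with overall factor $\l^2$).

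\textbf{Step 3 (the identities).} Substituting \eqref{A-B} and collecting monomials, the coefficient of every $x^a y^b$ in each of the six expressions reduces to a cyclic sum of the form $\sum_\a c_\a^2 P(j_\a)$ for some polynomial $P$. The elementary identities $\sum_\a c_\a^2=0$ and $\sum_\a c_\a^2 j_\a=0$, immediate from $c_\a^2=j_\b-j_\g$, together with the Newton identity $\sum_\a c_\a^2 j_\a^2=-(j_1-j_2)(j_2-j_3)(j_3-j_1)$, force the required cancellations, with contributions of the last quantity pairing up between different monomials.

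\textbf{Main obstacle.} The difficulty is not conceptual but organisational: the six bilinear expansions produce polynomials of total degree up to four in $x,y$ and up to three in the $j_\a$, so one has to track a few hundred monomials and confirm that each collapses by one of the few available cyclic identities. A symbolic-computation verification, or equivalently a single numerical substitution (for instance $(j_1,j_2,j_3)=(0,1,2)$) combined with a bounded-degree argument, is the most reliable way to finish the check.
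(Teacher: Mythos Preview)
Your proposal is correct and follows essentially the same route as the paper: first observe that the algebraicity condition \eqref{ZAB} is immediate from $Z(x)=Z(y)=Z(\l)=0$, then use it to collapse each $Z(Z(I))$ into the purely quadratic expressions \eqref{HKL} in $A_\a,B_\a,D_\a$, and finally substitute the ansatz \eqref{A-B} and verify the resulting cyclic identities using $\sum_\a c_\a^2=0$ and $\sum_\a c_\a^2 j_\a=0$. The paper's proof states exactly this and leaves the last step as ``direct calculations''; your Step~3 and the ``main obstacle'' paragraph simply make more explicit what those calculations entail (including the appearance of $\sum_\a c_\a^2 j_\a^2$), though your phrase about contributions ``pairing up between different monomials'' should rather read that the higher-degree $j_\a$ terms cancel \emph{within} the coefficient of each fixed monomial $x^a y^b$.
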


\begin{proof}
 In order to prove the proposition, we rewrite the equations \eqref{annul} (with the help of the conditions \eqref{ZAB}) in the form of six algebraic equations
\begin{subequations}\label{HKL}
\begin{gather}
\sum\limits_{\a=1}^3 \big(A_{\a}^2+ (j_{\b}+j_{\g}) B^2_{\a}+ 2 j_{\a} A_{\a} D_{\a}\big)=0, \qquad
 \sum\limits_{\a=1}^3 \big(j_{\a} A_{\a}^2+ j_{\b}j_{\g} B^2_{\a}\big)=0, \label{H}\\ \sum_{\a=1}^3B_{\a}A_{\a}=0,
\qquad
\sum_{\a=1}^3B_{\a}D_{\a}=0,\\
 \sum_{\a=1}^3 \big(j_{\a} D^2_{\a} + 2 D_{\a} A_{\a} +B_{\a}^2 \big)=0, \qquad \sum_{\a=1}^3 D^2_{\a}=0.\label{C3}
\end{gather}
\end{subequations}
Substituting the ansatz \eqref{A-B} into the equations \eqref{HKL}, taking into account the definitions
\eqref{calfa} and the following equalities
$
\sum_{\a=1}^3 c_{\a}^2=0$, $ \sum_{\a=1}^3 j_{\a} c_{\a}^2=0$,
after the direct calculations, we obtain that the equations \eqref{HKL} hold true.
\end{proof}

\begin{Remark}
Observe that we have six independent equations \eqref{HKL} 
 for nine functions $A_{\a}$, $B_{\a}$, $D_{\a}$, $\a \in \{1,2,3\}$. That is why the generic solution of the equations \eqref{HKL} 
 is three parametric. In the case of the proposed solution, these three parameters are $\l$, $x$ and $y$. We conjecture that there are only two three-parametric, non-equivalent solutions of the equations \eqref{annul} satisfying also the conditions \eqref{ZAB}. They are the presented solution \eqref{A-B} and the solution found in our paper \cite{SkrAsSoVExCle}.
\end{Remark}

Now we have to define the functions $\l$, $x$ and $y$. As it follows from our theory of non-St\"ackel vector field $Z$, it should annihilate one of the non-common Casimir functions $C_2(u)$ or $C_1(u)$ of~$\{\ ,\ \}_u$. It is easy to see that only the Casimir $C_1(u)$ can be taken for such a role because the condition $n_1=m-r-1$ is satisfied for it: we have $m=3$, $r=1$, $n_1=1$. That is why we will impose the condition
$Z(C_1(u))=0$,
 that is, we will define the functions $x$ and $y$ from the following two equations:
\[
Z(C_1)=\sum\limits_{\a=1}^3 (T_{\a} D_{\a}+ W_{\a} B_{\a})= 0, \qquad Z(L)= \sum\limits_{\a=1}^3 (T_{\a} A_{\a}+ S_{\a} B_{\a})= 0.
\]
These will be our main constraint equations. They are written more explicitly as follows:
\begin{subequations}\label{xy}
\begin{gather}
\sum\limits_{\a=1}^3 c_{\a}\bigl(y^2+2j_{\a} y+j_{\a}(j_{\b}+j_{\g})-j_{\b}j_{\g}) T_{\a}+ 2 x (y+j_{\a}) W_{\a} \bigr)= 0,
\\
\sum\limits_{\a=1}^3 c_{\a}\bigl( \big(x^2+ (j_{\b}+j_{\g}-j_{\a}) y^2+2 j_{\b} j_{\g} y + j_{\a} j_{\b} j_{\g}\big) T_{\a}+ 2 x (y+j_{\a}) S_{\a} \bigr)= 0.
\end{gather}
\end{subequations}

The following proposition holds true.
\begin{Proposition}\label{Zxynul}
Let the functions $x$ and $y$ be solutions of the equations \eqref{xy}. Then they are annihilated by the vector field $Z$,
\begin{equation}\label{annulxy}
Z(x)=0, \qquad Z(y)=0.
\end{equation}
\end{Proposition}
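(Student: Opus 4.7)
The plan is to differentiate the two constraints \eqref{xy} along $Z$ and exploit the second-order annihilation identities \eqref{annul} already established in the preceding proposition. By the explicit form \eqref{A-B}, the left-hand sides of \eqref{xy} coincide (up to the overall factor~$\l$) with $Z(C_1)$ and $Z(L)$ when $x,y$ are regarded as independent parameters; denote these polynomials by $\tilde{F}_1(x,y;S,T,W)$ and $\tilde{F}_2(x,y;S,T,W)$. The hypothesis of the proposition is precisely that $x=x(S,T,W)$, $y=y(S,T,W)$ solve $\tilde{F}_1=\tilde{F}_2=0$ identically on phase space.

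Applying the phase-space vector field $Z$ to these identities and expanding via the chain rule yields, for $i\in\{1,2\}$,
\[
0 = Z(\tilde{F}_i) = (Z\tilde{F}_i)\big|_{x,y\text{ fixed}} + (\partial_x \tilde{F}_i)\, Z(x) + (\partial_y \tilde{F}_i)\, Z(y).
\]
The first term on the right equals $\l^{-1}Z(Z(C_1))$ for $i=1$ and $\l^{-1}Z(Z(L))$ for $i=2$; both vanish by \eqref{annul}, which was verified in the previous proposition as a polynomial identity in the parameters $x,y,\l$ via the algebraic relations \eqref{HKL}. Consequently the two equations reduce to the homogeneous $2\times 2$ linear system
\[
\begin{pmatrix}\partial_x \tilde{F}_1 & \partial_y \tilde{F}_1\\ \partial_x \tilde{F}_2 & \partial_y \tilde{F}_2\end{pmatrix}\begin{pmatrix}Z(x)\\ Z(y)\end{pmatrix}=\begin{pmatrix}0\\0\end{pmatrix},
\]
from which the desired equalities \eqref{annulxy} follow as soon as this Jacobian is nonsingular.

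The remaining and main technical step is to verify that the Jacobian is generically invertible. A direct differentiation of \eqref{xy} shows that $\tilde{F}_1$ is linear in $x$ with $\partial_x\tilde{F}_1=2\sum_\a c_\a(y+j_\a)W_\a$, whereas $\tilde{F}_2$ carries a genuine $x^2$-term with coefficient $\sum_\a c_\a T_\a$; the four partial derivatives are therefore non-trivial expressions in the phase-space variables, and their $2\times 2$ determinant is a non-zero polynomial in $(x,y;S,T,W)$. Hence the Jacobian is invertible on a Zariski-dense open subset of phase space, giving $Z(x)=Z(y)=0$ there, and the identity then extends to all of phase space by continuity. Apart from this non-degeneracy check, which is the main obstacle, the proof reduces to a single chain-rule manipulation built on the already established identity \eqref{annul}.
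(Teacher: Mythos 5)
Your argument is correct, and it reorganizes the paper's proof in a genuinely cleaner way. The paper differentiates the constraints \eqref{xy} with respect to each initial coordinate $Y_{\alpha}$, solves the resulting $2\times 2$ linear systems for $\partial x/\partial Y_{\alpha}$, $\partial y/\partial Y_{\alpha}$, substitutes these into $Z(x)=\sum_{\alpha}\bigl(A_{\alpha}\partial_{S_{\alpha}}x+B_{\alpha}\partial_{T_{\alpha}}x+D_{\alpha}\partial_{W_{\alpha}}x\bigr)$ (and likewise for $y$), and checks the vanishing by ``direct and tedious calculations''. You instead differentiate the two constraint identities along $Z$ itself: the chain rule isolates the $Z(x)$, $Z(y)$ contributions, and the remaining ``$x,y$ fixed'' term is exactly $2\lambda^{-1}\sum_{\alpha}B_{\alpha}D_{\alpha}$ for the first constraint and a multiple of $\lambda^{-1}\sum_{\alpha}A_{\alpha}B_{\alpha}$ for the second, i.e., two of the six quadratic identities \eqref{HKL} already established as polynomial identities in $x$, $y$, $\lambda$. (Strictly it is these identities, not \eqref{annul} itself, that you invoke -- your parenthetical shows you see this, and it is what keeps the argument non-circular, since \eqref{annul} was derived under the very assumption $Z(x)=Z(y)=Z(\lambda)=0$ being proved here.) What remains is the homogeneous $2\times 2$ system with matrix $\partial(\tilde{F}_1,\tilde{F}_2)/\partial(x,y)$, and \eqref{annulxy} follows from its generic nonsingularity; this is the same nondegeneracy the paper uses tacitly, because its derivatives $\partial x/\partial Y_{\alpha}$, $\partial y/\partial Y_{\alpha}$ are obtained by solving that same $2\times 2$ system. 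Your route buys transparency -- the self-consistency $Z(x)=Z(y)=0$ is exhibited as a direct consequence of $\sum_{\alpha}A_{\alpha}B_{\alpha}=0$ and $\sum_{\alpha}B_{\alpha}D_{\alpha}=0$ -- at essentially no computational cost. The one point to tighten is the genericity check: the determinant should be shown not to vanish identically on the constraint variety $\{\tilde{F}_1=\tilde{F}_2=0\}$ (a single sample point suffices), since noting that the four entries are individually nontrivial does not by itself exclude an identical cancellation; this is a minor verification of the same kind the paper leaves implicit.
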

 \begin{proof}
 In order to prove the proposition, we use that by the very same definition
\begin{gather*}
Z(x)= \sum\limits_{\a=1}^3\left( A_{\a}\frac{\partial x}{ \partial S_{\a}}+
B_{\a}\frac{\partial x}{ \partial T_{\a}}+ D_{\a}\frac{\partial x}{ \partial W_{\a}}\right), \\
Z(y)= \sum\limits_{\a=1}^3\left( A_{\a}\frac{\partial y}{ \partial S_{\a}}+
B_{\a}\frac{\partial y}{ \partial T_{\a}}+ D_{\a}\frac{\partial y}{ \partial W_{\a}}\right).
\end{gather*}
To prove that the above expressions are zero, it is necessary to find explicitly the derivatives~\smash{$\frac{\partial x}{\partial Y_{ \a}}$} and $\frac{\partial y}{\partial Y_{ \a}}$, where $Y_{\a}=S_{\a}$ or $Y_{\a}=T_{\a}$ or $Y_{\a}=W_{\a}$.
 This is done using the equations \eqref{xy}. In more details, for each $\a\in \{1,2,3\}$, we have the following system of two equations that permits to find $\frac{\partial x}{\partial Y_{ \a}}$, $\frac{\partial y}{\partial Y_{ \a}}$:
 \begin{gather*}
 \frac{\partial Z(C_1)}{\partial y} \frac{\partial y}{\partial Y_{ \a}}+ \frac{\partial Z(C_1)}{\partial x} \frac{\partial x}{\partial Y_{ \a}}+ \frac{\partial Z(C_1)}{\partial Y_{ \a}}=0,
\\
 \frac{\partial Z(L)}{\partial y} \frac{\partial y}{\partial Y_{ \a}}+ \frac{\partial Z(L)}{\partial x} \frac{\partial x}{\partial Y_{ \a}}+ \frac{\partial Z(L)}{\partial Y_{ \a}}=0.
 \end{gather*}
 They are obtained by the differentiation of the constraints $Z(C_1)=0$ and $Z(L)=0$ with respect to $Y_{ \a}$.

Now, using the derivatives $\frac{\partial x}{\partial Y_{ \a}}$ and $\frac{\partial y}{\partial Y_{ \a}}$ calculated as it is explained above, the explicit form of the components $A_{\a}$, $B_{\a}$, $D_{\a}$ given by the formulae \eqref{A-B}, the explicit form of the coordinates and momenta of separation given in the theorem, the constraint equations \eqref{xy}, the definition~\eqref{calfa} of the constants $c_{\a}$, after direct and tedious calculations, we obtain the equalities \eqref{annulxy}.
\end{proof}

\begin{Remark}
The overall multiplier $\l$ is determined from the normalization condition $Z(C_3)=1$, i.e., $\l=\bigl(\sum_{\a=1}^3 D_{\a} W_{\a})^{-1}$. Using the fact that $Z(x)=Z(y)=0$ (see the proposition above), it is easy to show that $Z(\l)=0$, i.e., the construction of this section is self-consistent.
\end{Remark}
\section{Symmetric separated variables}\label{sec5}
\subsection{Separating polynomial and the coordinates of separation}
\subsubsection{The separating polynomial}
As it follows from the exposed above bi-Hamiltonian theory of non-St\"ackel vector field $Z$ one should look for the coordinates of separation as the roots of the following cubic polynomial:
\begin{align}
S(u)&= Z(\phi(u,C_2(u),C_3))= Z\big(u^3 C_3+C_2(u)\big)\nonumber\\
&= Z(C_3) u^3+ Z(C_2) u^2+ Z(H) u +Z(K),\label{sepp}
\end{align}
where $u$ is a parameter of the bi-Hamiltonian pencil and the needed vector field $Z$ is defined as in the previous section, i.e., $Z$ is algebraic and $Z(C_1(u))=0$.

\begin{Remark}
 Observe that the needed roots of the polynomial \eqref{sepp} do not depend on the overall normalization coefficient $\l$, that is why we will simply ignore it, putting hereafter that~${\l=1}$.
 \end{Remark}

\subsubsection{The auxiliary coordinate system}
In order to better understand the structure of $S(u)$ and simplify the constraint \eqref{xy}, we will introduce a system of the auxiliary coordinates. In more details, we will consider the following set of nine functions:
\begin{subequations}\label{fgh}
\begin{alignat}{4}
 &f_1 = \sum\limits_{\a=1}^3 D_{\a} S_{\a},\qquad&&
 f_2 = \sum\limits_{\a=1}^3 B_{\a} S_{\a}, \qquad&&
 f_3 = \sum\limits_{\a=1}^3 A_{\a} S_{\a},&
\\
& g_1 = \sum\limits_{\a=1}^3 D_{\a} T_{\a},\qquad&&
 g_2 = \sum\limits_{\a=1}^3 B_{\a} T_{\a}, \qquad&&
 g_3 = \sum\limits_{\a=1}^3 A_{\a} T_{\a},&
\\
& h_1 = \sum\limits_{\a=1}^3 D_{\a} W_{\a},\qquad&&
 h_2 = \sum\limits_{\a=1}^3 B_{\a} W_{\a}, \qquad&&
 h_3 = \sum\limits_{\a=1}^3 A_{\a} W_{\a}.&
 \end{alignat}
\end{subequations}
It is easy to see, that in the terms of these functions the constraint equations \eqref{xy} are written as follows:
\begin{equation}\label{mconstr}
h_2+g_1=0,\qquad f_2 +g_3=0.
\end{equation}
In other words, as a system of (local) coordinates on the nine-dimensional phase space one can take nine functions $f_1$, $f_3$, $g_2$, $g_3$, $h_1$, $h_2$ $h_3$, $x$, $y$. It is possible to show that they are functionally independent.

The formulae \eqref{fgh} define the invertible quasi-linear, i.e., linear in ($S_{\a}, T_{\a}, W_{\a}$) and $(f_{\a}, g_{\a},\allowbreak h_{\a})$ but nonlinear in $x$, $y$, map $(S_{\a}, T_{\a}, W_{\a}) \rightarrow (f_{\a}, g_{\a}, h_{\a})$. We have
\begin{subequations}\label{STW}
\begin{gather}
 S_{\a}=\frac{c_{\a}}{2 c_1^2c_2^2c_3^2 }\left( \left(1+\frac{y \big(y^2+2 y j_{\a}+j_{\a} (j_{\b}+j_{\g})-j_{\b} j_{\g}\big)}{x^2+(y+j_{1}) (y+j_2) (y+j_3)}\right) f_1-
 \frac{(y+j_{\a}) f_2}{x}\right.\nonumber\\
\phantom{S_{\a}=}{}
\left.+\frac{\big(y^2+2 y j_{\a}+j_{\a} (j_{\b}+j_{\g})-j_{\b} j_{\g}\big) f_3}{x^2+(y+j_1) (y+j_2) (y+j_3)}
 \right),
\\
 T_{\a}= \frac{c_{\a}}{2 c_1^2c_2^2c_3^2 }\left( \left(1+\frac{y \big(y^2+2 y j_{\a}+j_{\a} (j_{\b}+j_{\g})-j_{\b} j_{\g}\big)}{x^2+(y+j_{1}) (y+j_2) (y+j_3)}\right) g_1\right.\nonumber\\
\phantom{T_{\a}= }{}
\left.- \frac{(y+j_{\a}) g_2}{x}+\frac{\big(y^2+2 y j_{\a}+j_{\a} (j_{\b}+j_{\g})-j_{\b} j_{\g}\big) g_3}{x^2+(y+j_1) (y+j_2) (y+j_3)}
 \right),
\\
 W_{\a}=\frac{c_{\a}}{2 c_1^2c_2^2c_3^2 }\left( \left(1+\frac{y \big(y^2+2 y j_{\a}+j_{\a} (j_{\b}+j_{\g})-j_{\b} j_{\g}\big)}{x^2+(y+j_{1}) (y+j_2) (y+j_3)}\right) h_1\right.\nonumber\\
 \phantom{W_{\a}=}{}
 \left.-\frac{(y+j_{\a}) h_2}{x}+\frac{\big(y^2+2 y j_{\a}+j_{\a} (j_{\b}+j_{\g})-j_{\b} j_{\g}\big) h_3}{x^2+(y+j_1) (y+j_2) (y+j_3)}
 \right).
 \end{gather}
\end{subequations}
We use these formulae in the subsequent proofs.

\subsubsection[The coordinates of separation q\_1, q\_2, q\_3]{The coordinates of separation $\boldsymbol{q_1}$, $\boldsymbol{q_2}$, $\boldsymbol{q_3}$}
By direct calculation, using the formulae \eqref{STW}, it is possible to show that on the constraint \eqref{mconstr} the polynomial $S(u)$ acquires (in terms of the introduced $f-g-h$ coordinates) the following form:
\begin{equation}\label{Su}
S(u)= u^3 s_3+ u^2 s_2+ u s_1+ s_0,
\end{equation}
where
\begin{subequations}\label{Si}
\begin{gather}
s_3= 2 x h_1\bigl(x^2+(y+j_1)(y+j_2)(y+j_3)\bigr),
\\
s_2= 2(f_1+g_2+h_3 -y h_1) x^3 + h_2 \big(3 y^2+2 y (j_1+j_2+j_3)+j_1 j_2+j_1 j_3+j_2 j_3\big) x^2 \nonumber\\
\phantom{s_2=}{}+
 (y+j_1)(y+j_3)(y+j_2)\big(2(f_1+g_2-h_3-3 y h_1)x \nonumber\\
\phantom{s_2=}{}+ h_2\big(3y^2+2y (j_1+j_2+j_3)+j_1 j_2+j_1j_3+j_2j_3\big)\big),
\\
s_1= -x^4 h_2+2( f_3 -(f_1+ h_3+2 g_2)y) x^3- \big(3 y^2+2 y (j_1+j_2+j_3)+j_1 j_2+j_1 j_3+j_2 j_3\big)\nonumber
\\
\phantom{s_1=}{}\times(3 y h_2+f_2) x^2+(y+j_1) (y+j_2) (y+j_3)
\bigl(-2\big(-2y^2 h_1+(-h_3+3f_1+2g_2)y\nonumber
\\
\phantom{s_1=\times}{}+f_3 \big)x -2y^3h_2+(3f_2-h_2(j_1+j_3+j_2))y^2
+2(j_1+j_3+j_2)f_2y\nonumber
\\
\phantom{s_1=\times}{}+(j_1j_2+j_1j_3+j_2j_3)f_2+j_1 j_2 j_3 h_2\bigr),
\\
s_0= (f_2+2 y h_2) x^4-2 y (f_3-y g_2) x^3+ \bigl( 4y^4 h_2+(f_2+2h_2(j_1+j_2+j_3))y^3\nonumber
\\
\phantom{s_1=}{} -((j_1j_3+j_2j_3+j_1j_2)f_2
+2j_1j_2j_3 h_2)y+2j_1j_2j_3 f_2\bigr) x^2\nonumber
\\
\phantom{s_1=}{} + (y+j_1)(y+j_2)(y+j_3)\bigl(2 y((2f_1+g_2) y +f_3)x \nonumber
\\
\phantom{s_1=}{}
 -f_2 \big(2 y^3+y^2( j_1+j_2+j_3)-j_1 j_2 j_3\big)\bigr),
\end{gather}
\end{subequations}
and we have taken into account the constraints \eqref{mconstr}.

Hence, its roots $u=q_1$, $u=q_2$ and $u=q_3$
are our candidates for the separated coordinates. In order to prove that they are the coordinates of separation indeed, in the next subsection we show that they Poisson-commute, find the corresponding canonically conjugated momenta and equations of separation.

\subsection[The momenta of separation p\_1, p\_2, p\_3]{The momenta of separation $\boldsymbol{p_1}$, $\boldsymbol{p_2}$, $\boldsymbol{p_3}$}
In order to construct the momenta of separation $p_1$, $p_2$, $p_3$ (providing that $q_1$, $q_2$ and $q_3$ are the coordinates of separation indeed), we will proceed directly finding the momenta from the condition that they are (quasi)canonically conjugated to the coordinates $q_1$, $q_2$ and $q_3$.

The following theorem holds true.
\begin{Theorem}\label{momenti}
Let the functions $x$ and $y$ be solutions of the equations \eqref{xy}. Let the components~$A_{\a}$, $B_{\a}$, $D_{\a}$ of the vector field $Z$ be defined by the formulae \eqref{A-B} and the functions $f_{\a}$, $g_{\a}$, $h_{\a}$ be defined by the formulae \eqref{fgh}. Then
\begin{itemize}\itemsep=0pt
\item[$(i)$] The coordinates $q_1$, $q_2$, $q_3$ defined in the previous subsection Poisson-commute with respect to the both brackets $ \{\ ,\ \}_1$ and $\{\ , \ \}_2$
\begin{equation}\label{qcanon}
\{q_i,q_j\}_1=0, \qquad
\{q_i,q_j\}_2=0, \qquad i, j\in \{1,2,3\}.
\end{equation}

\item[$(ii)$] The functions $p_i$, $i \in \{1,2,3\}$ given by the following formulae:
\begin{equation}\label{pi}
 p_i=p(u)|_{u=q_i},
 \end{equation}
 where
\begin{align}
p(u)= {}&
\frac{1}{2 c_1 c_2 c_3 }\frac{\big((u-2y) x^2+u(y+j_1)(y+j_2)(y+j_3)\big) h_2}{x\big(x^2+(y+j_1)(y+j_2)(y+j_3)\big)(u-y)}\nonumber\\
&+\frac{\big((y+j_1)(y+j_2)(y+j_3)-x^2\big) f_2}{x\big(x^2+(y+j_1)(y+j_2)(y+j_3)\big)(u-y)},\label{pu}
\end{align}
are their canonically conjugated momenta with respect to the brackets $\{ \ ,\ \}_1$, i.e.,
\begin{subequations} \label{qcanon1}
\begin{gather}
 \{p_i,q_j\}_1= \d_{ij}, \qquad i, j\in \{1,2,3\},
\\
\{p_i,p_j\}_1=0, \qquad i, j\in \{1,2,3\},
\end{gather}
\end{subequations}
and quasi-canonically conjugated momenta with respect to the brackets $\{ \ ,\ \}_2$, i.e.,
\begin{subequations}\label{qcanon2}
\begin{gather}
 \{p_i,q_j\}_2=-q_i \d_{ij}, \qquad i, j\in \{1,2,3\},
\\
\{p_i,p_j\}_2=0, \qquad i, j\in \{1,2,3\}.
\end{gather}
\end{subequations}
\item[$(iii)$] The variables $q_i$, $p_j$ are $Z$-invariants, i.e.,
\begin{equation}\label{qpinv}
Z(q_i)=Z(p_j)=0, \qquad i,j \in \{1,2,3\}.
\end{equation}
\end{itemize}
\end{Theorem}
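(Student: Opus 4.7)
The plan is to work throughout in the auxiliary coordinate system $(x, y, f_\a, g_\a, h_\a)$ from \eqref{fgh}, subject to the two constraints \eqref{mconstr}. The inversion formulae \eqref{STW} express the original generators $S_\a, T_\a, W_\a$ as rational functions of these auxiliary variables, so the Poisson brackets \eqref{e3lpb} and \eqref{so4lpb} can be transported into the new basis. Since $S(u)$ from \eqref{Su}--\eqref{Si} and $p(u)$ from \eqref{pu} are rational in $x, y$ and the $f_\a, g_\a, h_\a$ only, every identity in the theorem reduces to one about these auxiliary generators.

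I would dispatch part (iii) first, as it follows directly from the construction of $Z$. Applying $Z$ to $S(u) = Z\bigl(\phi(u, C_2(u), C_3)\bigr)$ gives $Z(S(u)) \equiv 0$ as a polynomial in $u$ by \eqref{annul}, while the leading coefficient $s_3 = 2 x h_1\bigl(x^2 + (y+j_1)(y+j_2)(y+j_3)\bigr)$ is itself $Z$-invariant: Proposition~\ref{Zxynul} gives $Z(x) = Z(y) = 0$, and the identity $\sum_\a D_\a^2 = 0$ from \eqref{HKL} gives $Z(h_1) = 0$. The factorization $S(u) = s_3 \prod_i (u - q_i)$ then forces $Z(q_i) = 0$. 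For the momenta, $p(u)$ depends only on $x, y, f_2, h_2$, and the identities $Z(f_2) = \sum_\a A_\a B_\a = 0$, $Z(h_2) = \sum_\a B_\a D_\a = 0$ from \eqref{HKL} give $Z(p(u)) \equiv 0$ in $u$; hence $Z(p_i) = Z(p(q_i)) = p'(q_i)\, Z(q_i) = 0$.

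For part (i), I would adopt the standard strategy: compute $\{S(u), S(v)\}_k$ for $k \in \{1, 2\}$ as a polynomial in $(u, v)$ with coefficients in the Poisson algebra, and verify that it lies in the ideal generated by $S(u)$ and $S(v)$, which is equivalent to $\{q_i, q_j\}_k = 0$ for all $i,j$. Concretely, I would first tabulate the Poisson brackets $\{x, y\}_k$, $\{f_\a, g_\b\}_k$, and so on from \eqref{e3lpb}, \eqref{so4lpb} and the ansatz \eqref{A-B}; then substitute into $\{S(u), S(v)\}_k$ via \eqref{Si} and collapse the result using the constraints \eqref{mconstr} together with the six quadratic identities \eqref{HKL} satisfied by the components $A_\a, B_\a, D_\a$ of $Z$.

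Part (ii) follows by the same mechanism. Using part (i), $\{p_i, q_j\}_k = \{p(u), q_j\}_k\big|_{u = q_i}$, and this in turn (via the implicit function theorem applied to $S(q_j) = 0$) reduces to computing the rational expression $\{p(u), S(v)\}_k$ in $(u, v)$. The expected pattern is that $\{p(u), S(v)\}_1$ has a simple pole at $u = v$ with residue proportional to $\partial_v S(v)$, which yields $\delta_{ij}$ after evaluation at $u = q_i$, $v = q_j$, while for $\{\ ,\ \}_2$ an extra factor of $u = q_i$ emerges, giving $-q_i \delta_{ij}$. The main obstacle throughout is not conceptual but computational: each simplification relies on a controlled cancellation that exploits \eqref{HKL}, the constraints \eqref{xy}, and the elliptic identities $\sum_\a c_\a^2 = 0 = \sum_\a j_\a c_\a^2$ repeatedly, and successfully managing the substitutions $g_1 = -h_2$, $g_3 = -f_2$ throughout the expansion of the cubic polynomial $S(u)$ is where the real work lies.
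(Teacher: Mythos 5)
Your plan for items $(i)$ and $(ii)$ is essentially the paper's own proof: pass to the auxiliary coordinates $x$, $y$, $f_{\a}$, $g_{\a}$, $h_{\a}$, compute the brackets of the generating functions $S(u)$, $p(v)$, and verify the relations modulo the ideals generated by $S(u)$, $S(v)$, which by the standard result (the paper invokes \cite{SkrDub2018}) yields \eqref{qcanon}, \eqref{qcanon1}, \eqref{qcanon2}. Two small points of precision there: the brackets involving $x$ and $y$ cannot be read off from \eqref{STW} alone, since $x$, $y$ are only implicitly defined by \eqref{xy} -- one needs the derivatives $\partial x/\partial Y_{\a}$, $\partial y/\partial Y_{\a}$ obtained by differentiating the constraints, which is the chain-rule machinery the paper spells out; and the diagonal behaviour you describe as ``a simple pole at $u=v$'' is more accurately the limit relations \eqref{basic'}, $\lim_{v\to u}\{S(u),p(v)\}_1=\partial_u S(u)$ and $\lim_{v\to u}\{S(u),p(v)\}_2=-u\,\partial_u S(u)$ modulo $\mathcal{J}_{S(u)}$ (you should also record $\{p(u),p(v)\}_i=0$ mod the ideal to get $\{p_i,p_j\}=0$, though ``the same mechanism'' plausibly covers it). Where you genuinely depart from the paper is item $(iii)$: the paper proves $Z(S(u))=Z(p(v))=0$ by a direct, tedious computation using the implicit derivatives of $x$, $y$, whereas you deduce it abstractly -- $Z(S(u))\equiv 0$ in $u$ because its coefficients are $Z(Z(C_3))$, $Z(Z(C_2))$, $Z(Z(H))$, $Z(Z(K))$, which vanish by \eqref{annul}, while $Z(p(u))=0$ follows from $Z(x)=Z(y)=0$ (Proposition~\ref{Zxynul}) together with $Z(f_2)=\sum_{\a}A_{\a}B_{\a}=0$ and $Z(h_2)=\sum_{\a}B_{\a}D_{\a}=0$ from \eqref{HKL}. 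This is a cleaner and perfectly legitimate shortcut (it reuses facts already established in Section~\ref{sec4}); just state explicitly that $Z(p_i)$ picks up two terms, $[Z p](q_i)$ and $(\partial_u p)(q_i)Z(q_i)$, both of which vanish, rather than only the chain-rule term you wrote.
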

\begin{proof}
The proof of the items $(i)$--$(ii)$ is achieved upon the direct calculus of the Poisson brackets
among the separating polynomials $S(u)$, $S(v)$ and the momenta-generating functions~$p(u)$ and~$p(v)$. In order to calculate the corresponding Poisson brackets one needs to find the
 Poisson brackets among the intermediate coordinates $x$, $y$, $f_{\a}$, $g_{\b}$, $h_{\g}$, $\a,\b,\g \in \{1,2,3\}$, in the closed form.
 In order to calculate these brackets it is necessary to find explicitly the derivatives~\smash{$\frac{\partial x}{\partial Y_{ \a}}$} and~\smash{$\frac{\partial y}{\partial Y_{ \a}}$}, where $Y_{\a}=S_{\a}$ or $Y_{\a}=T_{\a}$ or $Y_{\a}=W_{\a}$.
 This is done using the constraint equations~\eqref{xy} in the same way as in Proposition \ref{Zxynul}.

In order to find the Poisson brackets among the coordinates $f_{\a}$, $g_{\b}$, $h_{\g}$, $\a,\b,\g \in \{1,2,3\}$, we use the fact that for any functions $F\equiv F(S_{\a}, T_{\b}, W_{\g}, x,y)$, $G\equiv G(S_{\a}, T_{\b}, W_{\g}, x, y)$ the following equality holds:
\begin{align*}
\{ F, G \}_i={}&\{ F, G \}'_i+
\frac{\partial F}{ \partial x}\sum\limits_{\a,\b=1}^3\biggl( \frac{\partial x}{\partial S_{ \a}}
\frac{\partial G}{\partial S_{ \b}}\{S_{\a},S_{\b}\}_i+
\frac{\partial x}{\partial T_{ \a}} \frac{\partial G}{\partial S_{
\b}}\{T_{\a},S_{\b}\}_i\\
&
+ \frac{\partial x}{\partial W_{ \a}}
\frac{\partial G}{\partial S_{ \b}}\{W_{\a},S_{\b}\}_i +\frac{\partial x}{\partial S_{ \a}}
\frac{\partial G}{\partial T_{ \b}}\{S_{\a},T_{\b}\}_i+ \frac{\partial x}{\partial T_{ \a}} \frac{\partial G}{\partial
T_{ \b}}\{T_{\a},T_{\b}\}_i
\\
& +\frac{\partial x}{\partial W_{ \a}}
\frac{\partial G}{\partial T_{ \b}}\{W_{\a},T_{\b}\}_i
+ \frac{\partial x}{\partial S_{ \a}}
\frac{\partial G}{\partial W_{ \b}}\{S_{\a},W_{\b}\}_i+\frac{\partial x}{\partial T_{ \a}} \frac{\partial G}{\partial
W_{ \b}}\{T_{\a},W_{\b}\}_i\\
&+ \frac{\partial x}{\partial W_{ \a}}
\frac{\partial G}{\partial W_{ \b}}\{W_{\a},W_{\b}\}_i \biggr)+
\frac{\partial F}{ \partial y}
\sum\limits_{\a,\b=1}^3 \biggl( \frac{\partial y}{\partial S_{ \a}}
\frac{\partial G}{\partial S_{ \b}}\{S_{\a},S_{\b}\}_i+
\frac{\partial y}{\partial T_{ \a}} \frac{\partial G}{\partial S_{
\b}}\{T_{\a},S_{\b}\}_i\\
& +
 \frac{\partial y}{\partial W_{ \a}}
\frac{\partial G}{\partial S_{ \b}}\{W_{\a},S_{\b}\}_i+ \frac{\partial y}{\partial S_{ \a}}
\frac{\partial G}{\partial T_{ \b}}\{S_{\a},T_{\b}\}_i
+
 \frac{\partial y}{\partial T_{ \a}} \frac{\partial G}{\partial
T_{ \b}}\{T_{\a},T_{\b}\}_i\\
& +
\frac{\partial y}{\partial W_{ \a}}
\frac{\partial G}{\partial T_{ \b}}\{W_{\a},T_{\b}\}
+ \frac{\partial y}{\partial S_{ \a}}
\frac{\partial G}{\partial W_{ \b}}\{S_{\a},W_{\b}\}_i+
 \frac{\partial y}{\partial T_{ \a}} \frac{\partial G}{\partial
W_{ \b}}\{T_{\a},W_{\b}\}_i\\
&+ \frac{\partial y}{\partial W_{ \a}}
\frac{\partial G}{\partial W_{ \b}}\{W_{\a},W_{\b}\}_i \biggr)-
 \frac{\partial G}{ \partial x} \sum\limits_{\a,\b=1}^3 \biggl( \frac{\partial x}{\partial S_{ \a}}
\frac{\partial F}{\partial S_{ \b}}\{S_{\a},S_{\b}\}_i
+\frac{\partial x}{\partial T_{ \a}} \frac{\partial F}{\partial S_{
\b}}\{T_{\a},S_{\b}\}_i\\
&+ \frac{\partial x}{\partial W_{ \a}}
\frac{\partial F}{\partial S_{ \b}}\{W_{\a},S_{\b}\}_i+ \frac{\partial x}{\partial S_{ \a}}
\frac{\partial F}{\partial T_{ \b}}\{S_{\a},T_{\b}\}_i+ \frac{\partial x}{\partial T_{ \a}} \frac{\partial F}{\partial T_{ \b}}\{T_{\a},T_{\b}\}_i
\\
&+
 \frac{\partial x}{\partial W_{ \a}}
\frac{\partial F}{\partial T_{ \b}}\{W_{\a},T_{\b}\}_i+ \frac{\partial x}{\partial S_{ \a}}
\frac{\partial F}{\partial W_{ \b}}\{S_{\a},W_{\b}\}_i \\
&+ \frac{\partial x}{\partial T_{ \a}} \frac{\partial F}{\partial
W_{ \b}}\{T_{\a},W_{\b}\}_i
+\frac{\partial x}{\partial W_{ \a}}
\frac{\partial F}{\partial W_{ \b}}\{W_{\a},W_{\b}\}_i \biggr)
\\
& -
 \frac{\partial G}{ \partial y} \sum\limits_{\a,\b=1}^3\biggl( \frac{\partial y}{\partial S_{ \a}}
\frac{\partial F}{\partial S_{ \b}}\{S_{\a},S_{\b}\}_i+
\frac{\partial y}{\partial T_{ \a}} \frac{\partial F}{\partial S_{
\b}}\{T_{\a},S_{\b}\}_i+\frac{\partial y}{\partial W_{ \a}}
\frac{\partial F}{\partial S_{ \b}}\{W_{\a},S_{\b}\}_i
\\
&+ \frac{\partial y}{\partial S_{ \a}}
\frac{\partial F}{\partial T_{ \b}}\{S_{\a},T_{\b}\}_i+ \frac{\partial y}{\partial T_{ \a}} \frac{\partial F}{\partial
T_{ \b}}\{T_{\a},T_{\b}\}_i+\frac{\partial y}{\partial W_{ \a}}
\frac{\partial F}{\partial T_{ \b}}\{W_{\a},T_{\b}\}_i\\
&+ \frac{\partial y}{\partial S_{ \a}}
\frac{\partial F}{\partial W_{ \b}}\{S_{\a},W_{\b}\}_i+ \frac{\partial y}{\partial T_{ \a}} \frac{\partial F}{\partial
W_{ \b}}\{T_{\a},W_{\b}\}_i + \frac{\partial y}{\partial W_{ \a}}
\frac{\partial F}{\partial W_{ \b}}\{W_{\a},W_{\b}\}_i \biggr)\\
&+ \biggl( \frac{\partial F}{ \partial x} \frac{\partial G}{ \partial y}- \frac{\partial G}{ \partial x}\frac{\partial F}{ \partial y} \biggr)\{x,y\}_i, \qquad i\in \{1,2\},
\end{align*}
where $\{ \ , \ \}'_i$ is a parenthesis $\{\ , \ \}_i$ in which $x$ and $y$ are treated as constant non-dynamical parameters
and the brackets $\{x,y\}_i$, in their turn, are calculated as follows:
\begin{align*}
\{ x, y \}_i={}&
\sum\limits_{\a,\b=1}^3\biggl( \frac{\partial x}{\partial S_{ \a}}
\frac{\partial y}{\partial S_{ \b}}\{S_{\a},S_{\b}\}_i+
\frac{\partial x}{\partial T_{ \a}} \frac{\partial y}{\partial S_{
\b}}\{T_{\a},S_{\b}\}_i+ \frac{\partial x}{\partial W_{ \a}}
\frac{\partial y}{\partial S_{ \b}}\{W_{\a},S_{\b}\}_i
 \\
 &+ \frac{\partial x}{\partial S_{ \a}}
\frac{\partial y}{\partial T_{ \b}}\{S_{\a},T_{\b}\}_i+ \frac{\partial x}{\partial T_{ \a}} \frac{\partial y}{\partial
T_{ \b}}\{T_{\a},T_{\b}\}_i +\frac{\partial x}{\partial W_{ \a}}
\frac{\partial y}{\partial T_{ \b}}\{W_{\a},T_{\b}\}_i
\\
&+ \frac{\partial x}{\partial S_{ \a}}
\frac{\partial y}{\partial W_{ \b}}\{S_{\a},W_{\b}\}_i+ \frac{\partial x}{\partial T_{ \a}} \frac{\partial y}{\partial
W_{ \b}}\{T_{\a},W_{\b}\}_i +\frac{\partial x}{\partial W_{ \a}}
\frac{\partial y}{\partial W_{ \b}}\{W_{\a},W_{\b}\}_i \biggr).
\end{align*}

Having calculated the Poisson brackets $\{\ ,\ \}_i$, $i\in \{1,2\}$, among the functions $x$, $y$, $f_{\a}$, $g_{\b}$, $h_{\g}$, $\a,\b,\g \in \{1,2,3\}$, in terms of the coordinates $x$, $y$, $S_{\a}$, $T_{\b}$, $W_{\g}$, $\a,\b,\g \in \{1,2,3\}$, we apply the formulae \eqref{STW} and recalculate the right-hand sides of these brackets in terms of the functions $x$, $y$, $f_{\a}$, $g_{\b}$, $h_{\g}$, $\a,\b,\g \in \{1,2,3\}$.

After that, taking into account the explicit form of the functions $S(u)$, $p(u)$ in terms of the functions $x$, $y$, $f_{\a}$, $g_{\b}$, $h_{\g}$, $\a,\b,\g \in \{1,2,3\}$, the constraints \eqref{mconstr}, the definition \eqref{calfa} of the constants $c_{\a}$ and the direct calculations, we come to the following equalities:
\begin{subequations}\label{basic}
\begin{gather}
\{ S(u), S(v)\}_i= 0 \mod \mathcal{J}_{S(u), S(v)},
\\
\{ S(u), p(v)\}_i= 0 \mod \mathcal{J}_{S(u), S(v)}, \qquad u\neq v,
\\
\{ p(u), p(v)\}_i= 0 \mod \mathcal{J}_{S(u), S(v)},
\end{gather}
\end{subequations}
as well as the following equalities:{\samepage
\begin{gather}
\lim_{v\rightarrow u} \{ S(u), p(v)\}_1= \partial_{u} S(u) \mod \mathcal{J}_{S(u)}, \nonumber\\
 \lim_{v\rightarrow u} \{ S(u), p(v)\}_2= - u \partial_{u} S(u) \mod \mathcal{J}_{S(u)}.\label{basic'}
\end{gather}}%

\noindent
Here $\mathcal{J}_{S(u), S(v)}$, $\mathcal{J}_{S(u)}$ are ideals in the space of functions generated by $S(u)$, $S(v)$ and $S(u)$, respectively.
As it follows from the results of \cite{SkrDub2018} the equalities \eqref{qcanon}, \eqref{qcanon1}, \eqref{qcanon2} follow from the equalities \eqref{basic}--\eqref{basic'}.

 This proves the items $(i)$ and $(ii)$ of the theorem.

In order to prove item $(iii)$ of the theorem, we use that for any function $F\equiv F(S_{\a}, T_{\b}, W_{\g},\allowbreak x,y)$ the following equality holds:
\begin{align*}
Z(F)={}& \sum\limits_{\a=1}^3\left( A_{\a}\frac{\partial F}{ \partial S_{\a}}+
B_{\a}\frac{\partial F}{ \partial T_{\a}}+ D_{\a}\frac{\partial F}{ \partial W_{\a}}\right) + \frac{\partial F}{ \partial x}\sum\limits_{\a=1}^3\left( A_{\a}\frac{\partial x}{ \partial S_{\a}}+
B_{\a}\frac{\partial x}{ \partial T_{\a}}+ D_{\a}\frac{\partial x}{ \partial W_{\a}}\right)
\\
&+
\frac{\partial F}{ \partial y} \sum\limits_{\a=1}^3\left( A_{\a}\frac{\partial y}{ \partial S_{\a}}+
B_{\a}\frac{\partial y}{ \partial T_{\a}}+ D_{\a}\frac{\partial y}{ \partial W_{\a}}\right).
\end{align*}
Then, using the derivatives $\frac{\partial x}{\partial Y_{ \a}}$ and $\frac{\partial y}{\partial Y_{ \a}}$, where $Y_{\a}=S_{\a}$ or $Y_{\a}=T_{\a}$ or $Y_{\a}=W_{\a}$, calculated as it is explained above, the explicit form of the components $A_{\a}$, $B_{\a}$, $D_{\a}$ given by the formulas~\eqref{A-B}, the explicit form of the coordinates and momenta of separation given in the text of the theorem, the constraint equations \eqref{xy}, the definition \eqref{calfa} of the constants $c_{\a}$, after tedious calculations we obtain the equalities
$
Z(S(u))=0$, $ Z(p(v))=0$, $ \forall u,v \in \mathbb{C}$.
From these equalities, the equalities \eqref{qpinv} immediately follow.
This proves item $(iii)$ of the theorem.
\end{proof}

\begin{Remark}

 The difference in sign of in the formulae \eqref{qcanon2} with respect to that in \eqref{qcanon} is not crucial and amounts only to the change of sign of spectral parameter of the corresponding Poisson pencil.
\end{Remark}
\subsection{The equations of separation}
In this subsection, we will find equations of separation satisfied by the constructed coordinates~$q_i$ and momenta $p_i$, $i\in \{1,2,3\}$.
The following theorem holds true.
\begin{Theorem}\label{sepeq}
The coordinates $q_i$ as the roots of the polynomial $S(u)$ given by \eqref{Su}--\eqref{Si}, the momenta $p_i$ defined by the formulae
\eqref{pi}--\eqref{pu} and integrals $H$, $K$, $L$, $C_1$, $C_2$, $C_3$ defined by the formulae \eqref{casimir}, \eqref{h}, \eqref{k}, \eqref{l}
satisfy the curve of separation $\mathcal{K}$ of genus five
\begin{gather}
(q_i+j_1)(q_i+j_2)(q_i+j_3) p_i^4+\big(q_i^3 C_3+q_i^2 C_2+ q_i H+K \big) p_i^2\nonumber \\
\qquad{}+ \frac{1}{4}(q_i C_1+L)^2=0,\qquad i\in \{1,2,3\}.\label{eqsepi}
\end{gather}
\end{Theorem}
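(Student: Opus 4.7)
The plan is to establish the curve identity by reducing it to a polynomial divisibility statement. Define
\[
P(u) := (u+j_1)(u+j_2)(u+j_3)\,p(u)^4 + \phi(u,C_2(u),C_3)\,p(u)^2 + \tfrac{1}{4}(u C_1 + L)^2,
\]
where $p(u)$ is the rational function in \eqref{pu} and $\phi(u,C_2(u),C_3)=u^3 C_3+u^2 C_2+uH+K$. Since $p_i=p(q_i)$ and $S(q_i)=0$, the theorem is equivalent to the congruence $P(u)\equiv 0\pmod{S(u)}$ in the field of rational functions on the phase space, once the common denominator of $p(u)$ is cleared. Verifying this congruence is the only substantial task.

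First I would change coordinates using the inverse map \eqref{STW} so that $S_\alpha,T_\alpha,W_\alpha$ are written in terms of $(x,y,f_\alpha,g_\alpha,h_\alpha)$, and impose the constraints \eqref{mconstr}, namely $h_2+g_1=0$ and $f_2+g_3=0$. All six integrals and Casimirs $H,K,L,C_1,C_2,C_3$ then become rational in $(x,y)$ and quadratic in the $f$--$g$--$h$ coordinates; the elliptic identities $\sum_\alpha c_\alpha^2=0$ and $\sum_\alpha j_\alpha c_\alpha^2=0$ used already in the proof of \eqref{HKL} collapse most cross-terms. In these coordinates $p(u)$ is a simple rational function of $u$ with a single pole at $u=y$ and explicit numerator in $f_2,h_2,x,y$.

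Next I would clear denominators. Multiplying $P(u)$ by $[x(x^2+(y+j_1)(y+j_2)(y+j_3))(u-y)]^4$ turns it into a polynomial $\widetilde{P}(u)$ in $u$ of degree at most twelve. Euclidean division by the cubic $S(u)$ leaves a remainder of degree at most two, whose three coefficients are polynomials in $(x,y,f_\alpha,g_\alpha,h_\alpha)$ subject to \eqref{mconstr}. Showing that each of these three coefficients vanishes identically is the concrete computation to be done. The structural reason it must succeed is that the six quadratic equations \eqref{HKL} are precisely $Z(Z(I))=0$ for $I\in\{H,K,L,C_1,C_2,C_3\}$, together with $Z(C_1(u))=0$; these are exactly the identities that control how the terms $p(u)^2$ and $(uC_1+L)^2$ interact with $\phi(u,C_2(u),C_3)$ modulo $S(u)$.

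The main obstacle is algebraic bookkeeping in the divisibility check. To keep it tractable I would organize the calculation by degree in $u$ and by degree in $p$: first identify the coefficient of $u^{12}$ in $\widetilde{P}(u)$ as an explicit multiple of the leading coefficient $s_3$ of $S(u)$; then proceed by descending order, using $Z(uC_1+L)=0$ in the form $uZ(C_1)+Z(L)=0$ to eliminate the $C_1$--$L$ coupling, and using the six quadratic relations \eqref{HKL} to recombine the remaining terms into a multiple of $S(u)$. Once the three remainder coefficients are shown to vanish, substitution $u=q_i$ immediately yields \eqref{eqsepi}. The genus-five statement then follows by Riemann--Hurwitz applied to the resulting degree-four cover of the $u$-line, whose branching is dictated by the discriminant of \eqref{eqsepi} treated as a quadratic in $p^2$.
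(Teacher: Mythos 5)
Your proposal is correct and follows essentially the same route as the paper: the paper likewise rewrites $H$, $K$, $L$, $C_1$, $C_2$, $C_3$ and $p(u)$ in the $(x,y,f_\a,g_\a,h_\a)$ coordinates via \eqref{STW} under the constraints \eqref{mconstr}, substitutes into the quartic expression, and verifies by a long direct computation that it vanishes modulo the ideal $\mathcal{J}_{S(u)}$ generated by the separating polynomial, which is exactly your ``remainder after division by $S(u)$ vanishes'' step. Your additional organizing remarks (using \eqref{HKL} and $Z(C_1(u))=0$, and Riemann--Hurwitz for the genus) are reasonable glosses on the same argument rather than a different method.
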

\begin{proof}
In order to prove the theorem, it is necessary to express the integrals $H$, $K$, $L$, $C_1$, $C_2$, $C_3$ in terms of the intermediate $f-g-h$ coordinate system. %
 Using the explicit form of the integrals in terms of the initial coordinate functions $S_{\a}$, $T_{\a}$, $W_{\a}$, $\a\in \{1,2,3\}$, the formulae \eqref{STW} and the constraints \eqref{mconstr} we, in particular, obtain
\begin{subequations}\label{intinf-g}
\begin{gather}
C_1= -\frac{1}{c_1^2 c_2^2 c_3^2 }\left( \frac{h_1 f_2+ 2 y h_2 h_1+h_3 h_2}{x^2+(y+j_1)(y+j_2)(y+j_3)}+\frac{h_2 g_2}{2 x^2}\right),\label{intinf-g-1}
\\
C_3=\frac{1}{c_1^2 c_2^2 c_3^2 }\left( \frac{ y h_1^2+h_3 h_1}{x^2+(y+j_1)(y+j_2)(y+j_3)} - \frac{h_2^2}{4 x^2} \right),\label{intinf-g-3}
\\
L= -\frac{1}{c_1^2 c_2^2 c_3^2 }\left( \frac{ 2yf_1 h_2+f_3h_2+f_2 f_1}{x^2+(y+j_1)(y+j_2)(y+j_3)}+\frac{g_2 f_2}{2 x^2}\right).
\end{gather}
\end{subequations}
 The explicit expressions for $C_2$, $H$ and $K$ have similar (quadratic in $f_{\a}$, $g_{\a}$, $h_{\a}$ and rational in~$x$ and~$y$ forms). We will not write these expressions explicitly here due to their long form.

Taking into account the explicit formulae
\eqref{intinf-g} and similar formulae for $C_2$, $H$, $K$, substituting them and the formula \eqref{pu} into the right-hand sides of the equalities \eqref{eqsepi}, after long and tedious calculations, we obtain that
\begin{gather*}
(u+j_1)(u+j_2)(u+j_3) p(u)^4+\big(u^3 C_3+u^2 C_2+ u H+K \big) p(u)^2\\
\qquad {}+ \frac{1}{4}(u C_1+L)^2=0 \mod \mathcal{J}_{S(u)}.
\end{gather*}
Here $\mathcal{J}_{S(u)}$ is an ideal generated by separating polynomial $S(u)$ given by the formulae \eqref{Su} and~\eqref{Si}.
\end{proof}

\begin{Remark}
 It is possible to show that the curve $\mathcal{K}$ is equivalent to a spectral curve of a four by four Lax matrix of the extended Clebsch and Manakov models.
\end{Remark}

\begin{Remark}
 Observe, that Theorems \ref{momenti} and \ref{sepeq} assure that the constructed variable separation is a bi-Hamiltonian one and the corresponding vector fields $Z_i$, $i\in \{1,2,3\}$, satisfying~\eqref{symZP}--\eqref{symZQ} do exist.
Nevertheless, the vector fields $Z_i$ are very complicated and are of no practical use. That is why we will not present them here leaving their calculation as an exercise for the interested reader.
\end{Remark}

\begin{Remark}
Note that from the explicit form of the equations of separation together with the constraints $Z(C_1)=0$, $Z(L)=0$ and from the fact that the coordinates $q_i$ and momenta $p_i$, $i\in \{1,2,3\}$, are $Z$-invariants also follows that
 the roots of $S(u)=u^3 Z(C_3)+ u^2 Z(C_2)+u Z(H)+Z(K)$ are separated coordinates. An additional demonstration of this fact is given in the next subsection.
\end{Remark}

\subsection[The vector field Z in the separated coordinates]{The vector field $\boldsymbol{Z}$ in the separated coordinates}
In this subsection, we will explicitly calculate the vector field $Z$ in terms of the coordinates of separation.
Resolving two of the equations \eqref{eqsepi} with respect to $H$ and $K$, we obtain their following form:
\begin{subequations}\label{h-k}
\begin{gather}
H = -\big(q_1^2+q_2q_1+q_2^2\big)C_3-(q_1+q_2)C_2-(q_1+j_1)(q_1+j_2)(q_1+j_3) \frac{p_1^2-p_2^2}{q_1-q_2}\nonumber\\
\phantom{H=}{}+
\frac{1}{4(q_1-q_2)}\left(\frac{(q_2C_1+L)^2}{p_2^2}-\frac{(q_1C_1+L)^2}{p_1^2}\right),
\\
K =(q_1+q_2)q_1q_2C_3+ q_2q_1C_2+(q_1+j_1)(q_1+j_2)(q_1+j_3)\frac{q_2 p_1^2- q_1 p_2^2}{q_1-q_2}
 \nonumber\\
\phantom{K=}{}+
\frac{1}{4(q_1-q_2)}\left(\frac{q_2(q_1C_1+L)^2}{p_1^2}- \frac{q_1(q_2C_1+L)^2}{p_2^2}\right).
\end{gather}
\end{subequations}

Substituting this into the third equation \eqref{eqsepi}, we obtain the following equation:
\begin{gather}
(q_2-q_3) (q_1-q_3) p_3^2 C_2+(q_2-q_3) (q_1-q_3) (q_1+q_2+q_3) p_3^2 C_3+ (q_1+j_3) (q_1+j_2) (q_1+j_1)\nonumber\\
\qquad\times \frac{ (q_2-q_3)}{(q_1 -q_2)} p_3^2 p_1^2
-(q_2+j_3) (q_2+j_2) (q_2+j_1)\frac{(q_1-q_3) }{(q_1-q_2)} p_3^2 p_2^2+(q_3+j_1) (q_3+j_2)\nonumber\\
\qquad\times (q_3+j_3) p_3^4+ \frac{1}{4} (q_3 C_1+L)^2-\frac{1}{4} (q_2 C_1+L)^2
\frac{(q_1-q_3) p_3^2}{(q_1-q_2) p_2^2}\nonumber\\
\phantom{\qquad\times}{}+ \frac{1}{4}(q_1 C_1+L)^2 \frac{(q_2-q_3) p_3^2}{(q_1-q_2) p_1^2}=0.\label{eqsep3'}
\end{gather}

Taking into account that
$
Z(q_1)=Z(q_2)=Z(q_3)=Z(p_1)=Z(p_2)=Z(p_3)= Z(C_1)=0
$
and adding the normalization condition $Z(C_3)=1$, we will look for the vector field $Z$ in the following form
$
Z=Z_3+c_2(q_1,q_2,q_3) Z_2$.
Acting on the equation \eqref{eqsep3'} by $Z$ and finding from the resulting equation $Z(L)$, we obtain
\begin{equation*}
Z(L)= \frac{2 (q_2-q_3)(q_1-q_3)(q_1-q_2) p_1^2p_2^2p_3^2((q_1+q_2+q_3)+c_2(q_1,q_2,q_3))}{ (q_1-q_2)(q_3C_1+L)p_1^2 p_2^2 +(q_3-q_1)(q_2C_1+L) p_1^2 p_3^2+(q_2-q_3)(q_1C_1+L)p_3^2p_2^2 }.
\end{equation*}

From this expression, it immediately follows that the condition $Z(L)=0$ yields the equality~${c_2(q_1,q_2,q_3)=-(q_1+q_2+q_3)}$, i.e.,
we obtain the following simple expression for the vector field $Z$:
\[
Z=Z_3-(q_1+q_2+q_3) Z_2.
\]

Acting by the defined as above vector field $Z$ on the Casimir functions $C_3$, $C_2$ and the integrals~$H$,~$K$ given by \eqref{h-k}, we obtain
$
Z(C_3)=1$, $ Z(C_2)=-(q_1+q_2+q_3)$, $ Z(H)= (q_1q_2+q_1q_3+q_2q_3)$, $ Z(K)= -q_1q_2q_3$,
which again demonstrates that for the given equations of separation $S(u)=u^3 Z(C_3)+ u^2 Z(C_2)+u Z(H)+ Z(K)$ is a polynomial-separator with the roots $q_1$, $q_2$, $q_3$.

\subsection{The Abel-type equations}
The most important for the integration of the equations of motion is possibility to represent these equations in the Abel-type form.
As it follows from the general theory exposed in the Section~\ref{general}, more exactly, from the formula \eqref{abel0}, the following differential equations for the coordinates $q_i$ hold true:
\begin{subequations}\label{AbelCleExt}
\begin{gather}
\sum\limits_{i=1}^3 \frac{2 q_i p_i^3 }{ 4 (q_i+j_1) (q_i+j_2)(q_i+j_3) p^4_i-(q_i C_1+L)^2}\frac{ \partial q_i}{\partial t_j} = \d_{1j},
\\
\sum\limits_{i=1}^3 \frac{2 p_i^3 }{ 4 (q_i+j_1) (q_i+j_2)(q_i+j_3) p^4_i-(q_i C_1+L)^2} \frac{ \partial q_i}{\partial t_j} = \d_{2j},
\\
\sum\limits_{i=1}^3 \frac{ (q_i C_1+L) p_i }{ 4 (q_i+j_1) (q_i+j_2)(q_i+j_3) p^4_i-(q_i C_1+L)^2}\frac{ \partial q_i}{\partial t_j} = \d_{3j},
\end{gather}
\end{subequations}
 where $j\in \{1,2,3\}$, $t_1$, $t_2$, $t_3$ are the parameters along the flows of the integrals $H$, $K$ and $L$ correspondingly
\[
\frac{\partial q_i}{\partial t_1}=\{H,q_i\}_1, \qquad \frac{\partial q_i}{\partial t_2}=\{K,q_i\}_1, \qquad \frac{\partial q_i}{\partial t_3}=\{L,q_i\}_1,
\]
and we have used the equation of separation \eqref{eqsepi} in order to simplify the form of the differentials on the curve $\mathcal{K}$ entering into the equations \eqref{AbelCleExt}. Using the equations \eqref{AbelCleExt}, we easily obtain the Abel-type quadratures written in the differential form as follows:
\begin{subequations}\label{AbelCleExt'}
\begin{gather}
\sum\limits_{i=1}^3 \frac{2 q_i p_i^3 \mathrm{d}q_i}{ 4 (q_i+j_1) (q_i+j_2)(q_i+j_3) p^4_i-(q_i C_1+L)^2} = \mathrm{d} t_1,
\\
\sum\limits_{i=1}^3 \frac{2 p_i^3 \mathrm{d} q_i}{ 4 (q_i+j_1) (q_i+j_2)(q_i+j_3) p^4_i-(q_i C_1+L)^2} = \mathrm{d} t_2,
\\
\sum\limits_{i=1}^3 \frac{ (q_i C_1+L) p_i \mathrm{d} q_i}{ 4 (q_i+j_1) (q_i+j_2)(q_i+j_3) p^4_i-(q_i C_1+L)^2} = \mathrm{d} t_3.
\end{gather}
\end{subequations}

\begin{Remark}
 Note that the Abel-type equations for the extended Manakov model has the same (modulo the overall sign) form as the equations \eqref{AbelCleExt} due to the bi-Hamiltonian equivalence of this model with the Clebsch model. The difference is that in the Manakov case the time flows~$t_1$,~$t_2$,~$t_3$ correspond to the brackets $\{\ ,\ \}_2$ and the integrals $C_2$, $H$ and $C_1$, but the Hamiltonian flows themselves (modulo the overall sign) are the same.
\end{Remark}

\section{Conclusion and discussion}\label{sec6}
In the present paper,
using the method of vector field $Z$ \cite{MFP}, we have constructed {\it symmetric}, {\it non-St\"ackel} variable separation for three-dimensional extension of the Clebsch and Manakov models, for which all curves of separation
 are the same and have genus five.
 We have explicitly constructed the coordinates and momenta of separation 
 and Abel-type quadratures in the considered examples of symmetric SoV for the extended Clebsch and Manakov models.

 We would like also to remark, that our recent results \cite{SkrJGP2025} on separation of variables for the Clebsch model can be re-obtained by the restriction of the construction of this paper onto the six dimensional subspace of Clebsch/Manakov models. By other words, the results of the present paper give also a bi-Hamiltonian explanation to the new variable separation for the Clebsch model constructed in \cite{SkrJGP2025}.

Finally, we would like to outline the following interesting open problems:
\begin{enumerate}\itemsep=0pt
\item[(1)] To find explicit solution of the Abel--Jacobi inversion problem for the Abel-type equations~\eqref{AbelCleExt'} in terms of theta-functions of Prym variety.
 \item[(2)] To obtain the generalization of the results of the present paper onto the higher-dimensional
 extensions of the Clebsch and Manakov models.
 \end{enumerate}

\subsection*{Acknowledgements} The author is grateful to Franco Magri for explaining of the bi-Hamiltonian approach to separation of variables, for the interest to the work and for useful discussions. The author is also grateful to the anonymous referees for their corrections, permitting him to improve the text of the article. The research described in this paper was made possible in part by Isaac Newton Institute and London Mathematical Society Solidarity grant. The author expresses his gratitude to the grant-givers.


\pdfbookmark[1]{References}{ref}
\LastPageEnding

\end{document}